
\pdfminorversion=4

\documentclass[12pt,draftcls,onecolumn]{IEEEtran}
%
\usepackage{xcolor}
\usepackage{tikz}
\usepackage{graphicx}
\usepackage[hidelinks]{hyperref}
\usepackage{amsmath}
\usepackage{dsfont}
\usepackage{circuitikz}
\usepackage{tikz-cd}
\usepackage{enumerate}
\usetikzlibrary{arrows,positioning}
\tikzset{
  shift left/.style ={commutative diagrams/shift left={#1}},
  shift right/.style={commutative diagrams/shift right={#1}}
}

\usepackage{fix-cm}

\usepackage{algorithm}
\usepackage{algpseudocode}

\makeatletter
\def\BState{\State\hskip-\ALG@thistlm}
\makeatother

\usepackage{amsmath,amsthm,amssymb}
\usepackage{tikz}
\usepackage{graphicx}
\usepackage[hidelinks]{hyperref}
\usepackage{amsmath}
\usepackage{dsfont}
\usepackage{circuitikz}

\newtheorem{lem}{Lemma}

\newtheorem{rem}{Remark}
\newtheorem{defn}{Definition}
\newtheorem{cor}{Corollary}
\newtheorem{assum}{Assumption}
\newtheorem{examp}{Example}

\newtheorem{claim}{Claim}

\newtheorem{theo}{Theorem}

\newtheorem{pb}{Problem}
\newcommand{\tp}{\intercal}		

\DeclareMathOperator{\ee}{\mathbb{E}}			
    





\hyphenation{non-linear}


\DeclareMathOperator\VVEC{\textsl{vec}}

\ifCLASSINFOpdf
\else
\fi
\hyphenation{op-tical net-works semi-conduc-tor}

\begin{document}
%
\title{{\LARGE Dynamic Teams and Decentralized Control Problems with Substitutable Actions}}
%
%
%

\author{Seyed Mohammad Asghari and Ashutosh Nayyar \vspace{-6mm}
\thanks{Preliminary version of this paper appeared in the proceedings of the 54th annual Conference on Decision and Control (CDC), 2015 (see \cite{AsghariNayyar:2015}).}
\thanks{S. M. Asghari and A. Nayyar are with the Department of
Electrical Engineering, University of Southern California, Los Angeles,
CA 90089 USA (e-mail: asgharip@usc.edu; ashutosn@usc.edu).} 
 \thanks{This research was supported by NSF under grants ECCS 1509812 and CNS 1446901.}
}
\vspace{-20mm}
\maketitle
\begin{abstract}
This paper considers two problems -- a dynamic team problem and a decentralized control problem.
The problems we consider do not belong to the known classes of ``simpler'' dynamic team/decentralized control problems such as partially nested or quadratically invariant problems. However, we show that our problems admit simple solutions under an assumption referred to as the substitutability assumption.  
Intuitively, substitutability in a team (resp. decentralized control) problem means that the effects of one team member's (resp. controller's) action on the cost function and the information (resp. state dynamics) can be achieved by an action of another member (resp. controller).
For the non-partially-nested LQG dynamic team problem, it is shown that under certain conditions linear strategies are optimal.  For the non-partially-nested decentralized LQG control problem, the state structure can be exploited  to obtain optimal control strategies with recursively update-able sufficient statistics. 
These results suggest that  substitutability can work as a counterpart of  the information structure requirements that enable simplification of dynamic teams and decentralized control problems. 
%
\end{abstract}


%
\IEEEpeerreviewmaketitle

\vspace{-2mm}
\section{Introduction}\label{sec:intro}

%
%

The difficulty of finding optimal strategies in dynamic teams and decentralized control problems has been well-established in the literature \cite{HoChu:1972, witsenhausen, LipsaMartins:2011b,blondel}. In general, the optimization of strategies can be a non-convex problem over infinite dimensional spaces \cite{YukselBasar:2013}. Even the celebrated linear quadratic Gaussian (LQG) model of centralized control presents difficulties in the decentralized setting \cite{ witsenhausen, LipsaMartins:2011b, mahajan_survey}. There has been significant interest in identifying classes of problems that are more tractable. Information structures, which describe what information is available to which member/controller,  have been closely associated with their tractability. Problems with partially nested \cite{HoChu:1972} or stochastically nested information structures \cite{Yuksel:2009} and problems that satisfy quadratic invariance \cite{RotkowitzLall:2006} or funnel causality \cite{BamiehVoulgaris:2005} properties have been identified as  being ``simpler'' than the general problems. 




In this paper, we first look at the nature of cost function and the information dynamics in a LQG  dynamic team problem. We define a property called substitutability which 
means that the effects of one member's action on the cost function and the information dynamics can be achieved by the action of another member. Although the problem we formulate is not partially nested, our result shows that, under certain conditions, linear strategies are optimal.
We then consider a decentralized LQG problem and show that the idea of substitutability can be used in such problems as well. Substitutability in a decentralized problem can be interpreted as follows: the effects of one controller's action on the instantaneous cost and the state dynamics can be achieved by the action of another controller. Even though the problem we formulate does not belong to one of the simpler classes mentioned earlier, our results show that linear strategies are optimal. Further,  we provide a complete state-space characterization of optimal strategies and  identify a family of information structures that all achieve the same cost as the centralized information structure. These results suggest that substitutability can work as a counterpart of the information structure requirements that enable simplification of dynamic teams and decentralized control problems.  

Our work shares conceptual similarities with the work on internal quadratic variance \cite{lessard_iqi, Lessard:phd} which identified problems that are not quadratically invariant but can still be reduced to (infinite dimensional) convex programs. In contrast to this work, we explicitly identify optimal  strategies in the decentralized control problem. The interplay of information structure and cost in relation to  the complexity of dynamic team problems has also been observed for variations of the Witsenhausen counterexample \cite{witsenhausen} in  \cite{bansal, rotkowitz_cdc_06}.


\vspace{-5mm}
\subsection{Notation}
Uppercase letters denote random variables/vectors and their corresponding realizations are represented by lowercase letters. Uppercase letters are also used to denote matrices. 
For two functions $f$ and $g$ and a random variable/vector $X$, $f(X)=g(X)$ is interpreted as follows: for every realization $x$ of $X$, the realizations of $f(X)$ and $g(X)$ are equal to each other. 
$\ee[\cdot]$ denotes the expectation of a random variable. For a collection of functions $\boldsymbol{g}$, 
$\ee^{\boldsymbol{g}}[\cdot]$ denotes that the expectation depends on the choice of functions in $\boldsymbol{g}$.
When random vector $X$ is normally distributed with mean $\mu$ and variance $\Sigma$, it is shown as $X \sim \mathcal{N}(\mu, \Sigma)$.

For a sequence of column vectors $X, Y, Z,...$, the notation $\textsl{vec}(X,Y,Z,...)$ denotes vector $[X^{\tp}, Y^{\tp}, Z^{\tp},...]^{\tp}$. The vector $\textsl{vec}(X_1, X_2,...,X_t)$ is denoted by $X_{1:t}$. 
In addition, for a sequence of column vectors $X^i$, $i \in \mathcal{A} = \{\alpha_1, \alpha_2, \ldots, \alpha_n\}$, $X^{\mathcal{A}}$ is used to denote the vector $\textsl{vec}(X^{\alpha_1}, X^{\alpha_2}, \ldots,X^{\alpha_n})$.
The transpose and Moore-Penrose pseudo-inverse of matrix $A$ are denoted by $A^{\tp}$ and $A^{\dagger}$, respectively. For two vectors $X$ and $Y$, we use $X \subset_v Y$ to denote that $X$ is a sub-vector of $Y$ and $X \not \subset_v Y$ to denote that $X$ is not a  sub-vector of $Y$. 
If $\mathcal{A}$ is a set, we denote the cardinality of $\mathcal{A}$ by $\left\vert{\mathcal{A}}\right\vert$. 
Furthermore, if $\mathcal{A} = \{\alpha_1, \alpha_2, \ldots, \alpha_n\}$, we use $\{B^{m}\}_{m \in \mathcal{A}}$ to denote a matrix composed of $B^{\alpha_1}, B^{\alpha_2}, \ldots, B^{\alpha_n}$ as row blocks, that is 
$\{B^{m}\}_{m \in \mathcal{A}} = [(B^{\alpha_1})^{\tp}, (B^{\alpha_2})^{\tp},\ldots, (B^{\alpha_n})^{\tp}]^{\tp}$. 
Similarly, $\{B^{mk}\}_{m \in \mathcal{A}}$ denotes 
a matrix composed of $B^{\alpha_1 k}, B^{\alpha_2 k}, \ldots, B^{\alpha_n k}$ as row blocks.








\vspace{-2mm}
\section{Dynamic Team with Non-partially-nested Information Structure}
\label{dynamic_team}
\subsection{Team Model and Information Structure}
\label{Team_mo}
We consider a team composed of $n$ members. The set $\mathcal{M} = \{1, 2, ..., n \}$ denotes the collection of team members. 
The random vector $\Xi$  taking values in $\mathbb{R}^{d_{\xi}}$ denotes all the exogenous uncertainties  which are not controlled by any of the members. The probability distribution of $\Xi$ is assumed to be  $\mathcal{N}(0,\Sigma)$, where $\Sigma$ is a positive definite matrix.

The information available to member $i$ is denoted by $Z^i \in \mathbb{R}^{d^i_z}$. 
Member $i$ chooses  action/decision 
$U^i \in \mathbb{R}^{d^i_u}$ as a function of the information available to it. Specifically, for $i \in \mathcal{M}$, $U^i = g^i (Z^i)$ where $g^i$ is the decision strategy of member $i$. The collection $\mathbf{g} = (g^1, g^2, \ldots, g^n)$ is called the team strategy. The performance of the team strategy $\mathbf{g}$ is measured by the expected cost 
\begin{align}
\label{cost_team}
&\mathcal{J}(\mathbf{g}) = \ee^{\mathbf{g}}\left[
       (M\Xi + NU)^{\tp}(M\Xi + NU)\right]
\end{align}
where $U = \VVEC(U^1,\ldots,U^n)$ and $ N = \begin{bmatrix} N^1 &\ldots  &N^n\end{bmatrix}$.

The information $Z^i$ available to member $i$ includes what it has observed and what other members have communicated to it. We assume that $Z^i$ is a known linear function of $\Xi$ and the decisions taken by some other members, that is,

\vspace{-5mm}
\begin{align}
\label{inf.struct.general}
Z^i = H^i \Xi + \sum_{j \in \mathcal{M} \setminus \{i\}} D^{ij} U^j  \hspace{9mm} \forall i \in \mathcal{M}
\end{align}
where $H^i$ and $D^{ij}$ are matrices with appropriate dimensions. We assume that members make decisions sequentially according to their index and that the information of member $i$ can depend only on the decisions of members indexed $1$ to $i-1$. Thus, we assume that 
\begin{align}
\label{fact_team}
D^{ij} = 0  \hspace{2mm} \mbox{for~}  j \geq i.
\end{align}

The matrices $H^i,D^{ij}, i,j \in \mathcal{M},$ in the information structure, the matrices $M, N^i, i \in \mathcal{M},$ in the cost function, and the probability distribution of random vector  $\Xi$ are known to all team members. 


Following \cite{HoChu:1972}, we define the following relationships among team members.
\begin{defn}
We say that member $s$ is related to  member $t$ and denote this  by $sRt$ if $D^{ts} \neq 0$.
Further, we say  that member $s$ is a precedent of  member $t$ and denote this by $s \rightsquigarrow t$ if (a) $sRt$, or (b) there exist distinct $k_1, k_2, \ldots, k_m \in \mathcal{M}$ such that $sRk_1$, $k_1Rk_2,\ldots, k_mRt$.
\end{defn}
We denote the set of all precedents of member $t$ by $\mathcal{P}^t$. The team is said to have a \emph{partially nested information structure} if for each member $t$ and each $s \in \mathcal{P}^t$, $Z^s \subset_v Z^t$. In other words, whenever  the decision of member $s$ affects the information of member $t$, then $t$ knows whatever $s$ knows. For partially nested information structure, optimal strategies can be obtained using the method described in \cite{HoChu:1972}. We will focus on teams where the information structure is not partially nested. 

\begin{defn}
\label{critical_pair_defn}
We say $(s,t)$ is a critical pair with respect to partial nestedness  if $s \rightsquigarrow t $ but $Z^s \not \subset_v Z^t$.
\end{defn}
We denote the set of all members $s \in \mathcal{P}^t$ for which $(s,t)$ is a critical pair by $\mathcal{C}^t$. According to above definitions, an information structure is not partially nested if there exists $t \in \mathcal{M}$ for which $\mathcal{C}^t \neq \varnothing$.
\vspace{-3mm}
\subsection{Substitutability Assumption}
We make the following assumption about the team model.
\begin{assum}
\label{asum_subs_team}
For every critical pair $(s,t)$, there exists a member $k$ such that
\begin{enumerate}
\item $Z^s \subset_v Z^k$, and
\item  For every $u^t \in \mathbb{R}^{d^t_u}$, there exists a $u^k \in \mathbb{R}^{d^k_u}$ such that 
\begin{subequations}
\begin{align}
\label{subs_condition_1}
N^t u^t &= N^k u^k, \\
\label{subs_condition_2}
D^{mt} u^t &= D^{mk} u^k  \hspace{7mm} \forall m \in \mathcal{M}.
\end{align}
\end{subequations}
We  refer to member $k$ as the Substituting Member for the critical pair $(s,t)$.
\end{enumerate}
\end{assum}

\begin{examp}
Consider Problem \ref{Prob_1} with the following information structure:
\begin{align}
\label{example_info}
Z^1 &= H^1 \Xi, \hspace{3mm} Z^2 = H^2 \Xi + D^{21}U^1 ~~ (D^{21} \neq 0, Z^1 \not \subset_v Z^2)\nonumber \\
Z^3 & = H^3 \Xi + D^{31}U^1
=\begin{bmatrix}
H^1 \\
H^{3'}
\end{bmatrix} \Xi + \begin{bmatrix}
0 \\
D^{31'}
\end{bmatrix} U^1    \nonumber \\
Z^4 & = H^4 \Xi + D^{41}U^1 + D^{42}U^2 + D^{43}U^3 \nonumber \\
&=\begin{bmatrix}
H^1 \\
H^{3'} \\
H^2 \\
H^{4'}
\end{bmatrix} \Xi + \begin{bmatrix}
0 \\
D^{31'} \\
D^{21} \\
0
\end{bmatrix} U^1 + \begin{bmatrix}
0 \\
0 \\
0 \\
D^{42'}
\end{bmatrix} U^2
+ \begin{bmatrix}
0 \\
0 \\
0 \\
D^{43'}
\end{bmatrix} U^3 .
\end{align}
In this information structure, $(1,2)$ is a critical pair since $D^{21} \neq 0$ but $Z^1 \not \subset_v Z^2$. Since $Z_1 \subset_v Z_3$, the substitutability assumption may be satisfied in this example if for every $u^2$, there exists a $u^3$ satisfying $N^2 u^2 = N^3 u^3$ and $D^{42}u^2 = D^{43} u^3$. 
\end{examp}

\vspace{-2mm}

\begin{rem}
The substituting member $k$ for the critical pair $(s,t)$ can be the member $s$ itself as long as condition 2 of Assumption \ref{asum_subs_team} can be satisfied.
\end{rem}

\begin{rem}
In order to check condition 2 of Assumption \ref{asum_subs_team}, we need to verify that the column space of the matrix $\begin{bmatrix}
N^t \\
\{D^{mt}\}_{m \in \mathcal{M}}
\end{bmatrix}$ is contained in the column space of the matrix $\begin{bmatrix}
N^k \\
\{D^{mk}\}_{m \in \mathcal{M}}
\end{bmatrix}$. This can be easily done, for instance, by projecting the columns of the first matrix onto the column space of the second and verifying that the projection leaves the columns unchanged. Alternatively, for each column $c$ of the first matrix, we can check if the following equation has a solution for $x$: $\begin{bmatrix}
N^k \\
\{D^{mk}\}_{m \in \mathcal{M}}
\end{bmatrix} x=c$.
\end{rem}

The following lemma is immediate from the theory of pseudo-inverses \cite{ben2006generalized}.
\begin{lem}
\label{lem_team}
If a solution $u^k$ to \eqref{subs_condition_1} and \eqref{subs_condition_2} exists, it can be written as
\begin{align*}
u^k =  \Lambda^{kst}u^t = \begin{bmatrix}
N^k \\
\{D^{mk}\}_{m \in \mathcal{M}}
\end{bmatrix}^{\dagger}
\begin{bmatrix}
N^t \\
\{D^{mt}\}_{m \in \mathcal{M}}
\end{bmatrix} u^t.
\end{align*}
\end{lem}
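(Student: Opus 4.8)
The plan is to reduce the claimed identity to a standard fact about consistent linear systems and the Moore--Penrose pseudo-inverse. Write
\[
A^k = \begin{bmatrix} N^k \\ \{D^{mk}\}_{m \in \mathcal{M}} \end{bmatrix},
\qquad
A^t = \begin{bmatrix} N^t \\ \{D^{mt}\}_{m \in \mathcal{M}} \end{bmatrix},
\]
so that conditions \eqref{subs_condition_1} and \eqref{subs_condition_2}, stacked together, say precisely that $u^k$ solves the linear system $A^k u^k = A^t u^t$. By hypothesis this system is consistent (a solution $u^k$ exists for the given $u^t$), so the right-hand side $A^t u^t$ lies in the column space of $A^k$.

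The key step is then to invoke the characterization of solutions of a consistent linear system $A^k x = b$: the set of solutions is $\{A^{k\dagger} b + (I - A^{k\dagger}A^k) w : w \text{ arbitrary}\}$, and in particular $x = A^{k\dagger} b$ is always one of them (see \cite{ben2006generalized}). Applying this with $b = A^t u^t$ yields that $u^k = A^{k\dagger} A^t u^t = \Lambda^{kst} u^t$ is a valid solution, which is exactly the asserted formula. I would state this as: since a solution exists, $A^t u^t \in \operatorname{range}(A^k)$, hence $A^k A^{k\dagger} A^t u^t = A^t u^t$ (using the defining property $A^k A^{k\dagger} b = b$ for $b \in \operatorname{range}(A^k)$), so $u^k := A^{k\dagger} A^t u^t$ satisfies $A^k u^k = A^t u^t$, which unpacks into \eqref{subs_condition_1} and \eqref{subs_condition_2}.

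There is essentially no obstacle here: the lemma is, as the text says, immediate once one recognizes the stacked conditions as a single consistent linear system and quotes the pseudo-inverse solution formula. The only point needing a word of care is that the \emph{existence} hypothesis is what guarantees $A^t u^t \in \operatorname{range}(A^k)$; without it, $A^{k\dagger} A^t u^t$ would merely be the least-squares solution and need not satisfy the equations exactly. I would therefore make that dependence explicit in one line and then conclude. If desired, one could also remark that $\Lambda^{kst}$ does not depend on $u^t$, so the substituting action is a fixed linear map of $u^t$ — a fact that will presumably be used later when arguing that linear strategies are optimal.
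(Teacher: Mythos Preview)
Your proposal is correct and matches the paper's approach: the paper simply declares the lemma ``immediate from the theory of pseudo-inverses'' with a citation, and you have supplied exactly the standard argument behind that remark---stack the conditions into a single consistent system $A^k u^k = A^t u^t$ and invoke the fact that $(A^k)^\dagger(A^t u^t)$ is a solution whenever one exists. Your added caveat that existence is what guarantees $A^t u^t \in \operatorname{range}(A^k)$ (so that the pseudo-inverse gives an exact rather than least-squares solution) is a useful clarification the paper leaves implicit.
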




The optimization problem is defined as follows.

\begin{pb}
\label{Prob_1}
For the model described in section \ref{Team_mo}, given that information structure is not partially nested and the substitutability assumption (Assumption \ref{asum_subs_team}) holds, find the team strategy $\mathbf{g}= (g^1,\ldots, g^n)$ that minimizes the expected cost given by \eqref{cost_team}. 
\end{pb}
\vspace{-5mm}
\subsection{Partially nested expansion of the information structure}
In order to solve Problem \ref{Prob_1}, we will consider a partially nested expansion of the information structure.   This expansion is constructed by simply providing each team member $i$ with the information of members in $\mathcal{C}^i$.
We thus formulate the following problem. 

\begin{pb}
\label{Prob_2}
Solve Problem \ref{Prob_1} under the assumption that the information available to member $i$, $i \in \mathcal{M}$, is
\begin{align}
\label{pn_inf.}
\tilde{Z}^i = \begin{bmatrix}
Z^i \\
Z^{\mathcal{C}^i}
\end{bmatrix}.
\end{align}
\end{pb}

\begin{lem}
\label{partially_nestedness_P2}
The information structure of Problem \ref{Prob_2} is partially nested.
\end{lem}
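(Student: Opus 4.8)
The plan is to unwind the definition of partial nestedness for the expanded information structure $\tilde Z^i=\VVEC(Z^i,Z^{\mathcal C^i})$ of Problem~\ref{Prob_2}, by establishing two facts. First, passing from $Z^i$ to $\tilde Z^i$ does not enlarge the precedence relation: every precedent of a member in Problem~\ref{Prob_2} is already a precedent in Problem~\ref{Prob_1}. Second, $\tilde Z^i$ contains, as a sub-vector, the original observation $Z^s$ of every $s\in\mathcal P^i$ (and, trivially, $Z^i$ itself). Granting these, partial nestedness of Problem~\ref{Prob_2} follows quickly: if $s$ is a precedent of $t$ in Problem~\ref{Prob_2}, then by the first fact $s\in\mathcal P^t$; since $\rightsquigarrow$ is transitive, $\{s\}\cup\mathcal C^s\subseteq\{s\}\cup\mathcal P^s\subseteq\mathcal P^t$; hence each of the row blocks $Z^s$ and $Z^r$ ($r\in\mathcal C^s$) that make up $\tilde Z^s$ is a sub-vector of $\tilde Z^t$ by the second fact, and---by the explicit way $\tilde Z^t$ is assembled out of $Z^t$ and the blocks $\{Z^r\}_{r\in\mathcal C^t}$---their concatenation $\tilde Z^s$ is itself a sub-vector of $\tilde Z^t$, i.e. $\tilde Z^s\subset_v\tilde Z^t$.

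For the first fact, I would start from the elementary observation that along any chain $k_1Rk_2R\cdots$ the indices strictly increase, because $D^{rs}=0$ for $s\ge r$ by \eqref{fact_team}; consequently $\rightsquigarrow$ is transitive and $\mathcal P^i\subseteq\{1,\dots,i-1\}$ for every $i$. Now in Problem~\ref{Prob_2} the decision $U^s$ enters $\tilde Z^t=\VVEC(Z^t,Z^{\mathcal C^t})$ only if either $D^{ts}\ne 0$, i.e. $sRt$, or $D^{rs}\ne 0$ for some $r\in\mathcal C^t$, i.e. $sRr$ with $r\rightsquigarrow t$ (since $\mathcal C^t\subseteq\mathcal P^t$); in both cases $s\rightsquigarrow t$. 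Taking the transitive closure, any precedent of $t$ in Problem~\ref{Prob_2} is a precedent of $t$ in Problem~\ref{Prob_1}. The same computation shows that $j<i$ whenever $U^j$ enters $\tilde Z^i$, so the sequential-decision structure \eqref{fact_team} is inherited as well.

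For the second fact, fix $i$ and $s\in\mathcal P^i$. If $Z^s\subset_v Z^i$, then since $Z^i\subset_v\tilde Z^i$ and $\subset_v$ is transitive we get $Z^s\subset_v\tilde Z^i$. Otherwise $(s,i)$ is a critical pair, so $s\in\mathcal C^i$ and $Z^s$ is one of the row blocks of $Z^{\mathcal C^i}$, whence again $Z^s\subset_v\tilde Z^i$. This, together with the first fact, finishes the proof. I expect the first fact to be the crux: one must verify that feeding member $i$ the extra observations $Z^{\mathcal C^i}$ creates no precedence relations outside $\rightsquigarrow$. The remainder is bookkeeping with the sub-vector relation, the only slightly delicate point being that a concatenation of blocks, each of which lies inside $\tilde Z^t$, is itself a sub-vector of $\tilde Z^t$---which is transparent from the construction of $\tilde Z^t$ in Problem~\ref{Prob_2}, where $\tilde Z^t$ simply lists $Z^t$ followed by the observations $Z^r$ of the members $r\in\mathcal C^t$.
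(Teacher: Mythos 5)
Your proof is correct and follows essentially the same route as the paper's: show that the precedence relation of Problem~\ref{Prob_2} is contained in that of Problem~\ref{Prob_1}, observe that $\tilde Z^i$ contains $Z^k$ for every $k\in\mathcal P^i$ (either via $Z^k\subset_v Z^i$ or via $k\in\mathcal C^i$), and conclude by transitivity of $\rightsquigarrow$ that all blocks of $\tilde Z^s$ sit inside $\tilde Z^t$. The only difference is that you spell out the step the paper dismisses as ``easily established''---that the expansion creates no new precedence relations---which is a worthwhile addition but not a change of method.
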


\begin{proof}
It can be easily established that Problems 1 and 2 have the same precedence relationships. That is, $j$ is a precedent of $i$ in Problem 1 if and only if $j$ is a precedent of $i$ in Problem 2. Further, if $k$ is a precedent of $i$ (in both problems), then, by construction (see \eqref{pn_inf.}),  $\tilde{Z}^i$ contains $Z^k$.  

 Now, suppose that $j$ is a precedent of $i$ in Problem 2. To establish partial nestedness of the information structure in Problem 2, we need to establish that $\tilde{Z}^j = \VVEC(Z^j, Z^{\mathcal{C}^j}) \subset_v \tilde{Z}^i$. We already know that $Z^j \subset_v \tilde{Z}^i$. Further,  any $k \in \mathcal{C}^j$ is a precedent of $j$ and since $j$ is a precedent of $i$, it follows that $k$ is a precedent of $i$. Thus, $\tilde{Z}^i$ must contain $Z^k$.  This establishes that $\tilde{Z}^j  \subset_v \tilde{Z}^i$ and hence the information structure in Problem 2 is partially nested.
\end{proof}
\vspace{-3mm}
\begin{rem}
The idea of considering an expanded information structure and using strategies in the expansion to investigate optimal strategies in the original information structure has been used before \cite[Section 3.5.2]{YukselBasar:2013}, \cite{Yuksel:2009, chu1971team, mahajan_yuksel}. In some cases \cite[Section 3.5.2]{YukselBasar:2013}, \cite{Yuksel:2009}, it is shown that the expansion is redundant as far as strategy optimization is concerned: an optimal strategy is found in the expanded information structure that is  implementable in the original information structure. In \cite{chu1971team, mahajan_yuksel}, an optimal strategy found in the expanded structure may not be directly implementable in the original information structure but, under some conditions, it can be used to construct  an optimal strategy in the original information structure  (see Section \ref{sec:discussion}). As discussed below, our use of information structure expansion differs from both these approaches.
\end{rem}
\vspace{-5mm}
\subsection{Main Results}
Our main result relies on showing that we can construct an optimal team strategy in Problem \ref{Prob_1} from an optimal team strategy in the partially nested expansion of this problem (Problem \ref{Prob_2}). We start with the following assumption.

\begin{assum}
\label{linear_strategy_P2}
In Problem \ref{Prob_2} with partially nested information structure, there exists an  optimal  team  strategy $\boldsymbol\gamma_0 = (\gamma^1_0, \gamma^2_0, \ldots, \gamma^n_0)$  that is a linear function of the members' information and is given as
\begin{align}
\label{opt_st_p2}
{U}^i = \gamma^i_0(\tilde{Z}^i) = K^{ii}_0 Z^i + \sum_{j \in \mathcal{C}^i} K^{ij}_0 Z^j, ~~~ i \in \mathcal{M}.
\end{align}
\end{assum}


\vspace{-4mm}
\begin{rem}\label{rem:static}
The information available to each member of Problem \ref{Prob_2} can be written as,
\begin{align}
\label{partially_nested.inf.struct.}
\tilde{Z}^i = \tilde{H}^i \Xi + \sum_{j \in \mathcal{M} } \tilde{D}^{ij} U^j  
\hspace{9mm} \forall i \in \mathcal{M}
\end{align}
where $\tilde{H}^i = \{H^m\}_{m \in \{i\} \cup {\mathcal{C}^i}}$ and $\tilde{D}^{ij} = \{D^{mj}\}_{m \in \{i\} \cup {\mathcal{C}^i}}$.
Since Problem \ref{Prob_2} is a partially nested LQG problem, \cite[Theorem 1]{HoChu:1972} shows that it is equivalent to a static LQG team problem with the following information structure:

\vspace{-4mm}
\begin{align}
\label{partially_nested.inf.struct._2}
\hat{Z}^i = \tilde{H}^i \Xi  \hspace{9mm} \forall i \in \mathcal{M}.
\end{align}

\vspace{-1mm}
In particular, a linear strategy is optimal for Problem \ref{Prob_2} iff a linear strategy is optimal for the equivalent static team.
If the static LQG team has a cost function that is strictly convex in the team decision, then the optimal strategy of each team member is linear in the information of this member \cite{Radner:1962}. However, the cost function of \eqref{cost_team} is not strictly convex since $N^{\tp}N$ is not positive definite. Hence, the result of \cite{Radner:1962} cannot be applied here. 

According to \cite{AsghariNayyar_report}, for static LQG team problems with a cost function that is convex (not necessarily strictly convex) in the team decision, the linear team strategy $\gamma^i(\hat{Z}^i) = \Pi^i \hat{Z}^i$ for all $i \in \mathcal{M}$ is optimal if the following linear system of equations has a solution for $\Pi^i$, $i \in \mathcal{M}$,

\vspace{-5mm}
\begin{align}
\label{equation_system_linearity}
\sum_{j=1}^n  (N^i)^{\tp} N^j \Pi^j  \Sigma_{\hat{Z}^j \hat{Z}^i} = - (N^i)^{\tp} M \Sigma_{\Xi \hat{Z}^i} \hspace{6mm} \forall i \in \mathcal{M}
\end{align}
where $\Sigma_{\hat{Z}^j \hat{Z}^i} = \ee [\hat{Z}^j (\hat{Z}^i)^{\tp}]$ and $\Sigma_{\Xi \hat{Z}^i} = \ee [\Xi (\hat{Z}^i)^{\tp}] $. Therefore, Assumption \ref{linear_strategy_P2} is true as long as \eqref{equation_system_linearity} has a solution.
\end{rem}

\begin{rem}
Since team members in Problem \ref{Prob_2} have more information than the corresponding members in Problem \ref{Prob_1}, it follows that the optimal expected cost in Problem \ref{Prob_2} is a lower bound on the optimal expected cost in Problem \ref{Prob_1}.
\end{rem}

\vspace{-3mm}

\begin{theo}
\label{theorem_1}

\textcolor{black}{Under Assumptions 1 and 2,} there exist linear strategies in Problem \ref{Prob_1}, given as 
\vspace{-1mm}
\[U^i = \Gamma^iZ^i, ~~~i \in \mathcal{M},\]  that achieve the same expected cost as the optimal strategies in Problem \ref{Prob_2}. Consequently, the strategies  $U^i = \Gamma^iZ^i$, $i \in \mathcal{M}$, are optimal strategies in Problem \ref{Prob_1}.
\end{theo}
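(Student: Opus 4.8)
The plan is to construct an optimal strategy for Problem~\ref{Prob_1} out of the optimal linear strategy $\boldsymbol\gamma_0$ guaranteed by Assumption~\ref{linear_strategy_P2} for the expanded Problem~\ref{Prob_2}, by repeatedly ``handing off'' to a substituting member every term in a member's strategy that uses information unavailable in the original structure. Since each member in Problem~\ref{Prob_2} has at least as much information as in Problem~\ref{Prob_1}, the optimal cost of Problem~\ref{Prob_2} lower-bounds that of Problem~\ref{Prob_1}; hence, once we exhibit a \emph{linear} strategy $U^i=\Gamma^iZ^i$, $i\in\mathcal{M}$, that is feasible for Problem~\ref{Prob_1} and achieves the cost of $\boldsymbol\gamma_0$, that strategy is automatically optimal for Problem~\ref{Prob_1}, which is the claim.

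The elementary step is as follows. Suppose that in the current (linear) strategy profile some member $t$ still uses a term $K^{tj}Z^j$ with $j\in\mathcal{C}^t$ --- so that $(j,t)$ is a critical pair and $Z^j$ is not a sub-vector of what $t$ observes in Problem~\ref{Prob_1}. Let $k$ be a Substituting Member for $(j,t)$ from Assumption~\ref{asum_subs_team}, and set $v:=K^{tj}Z^j$. Modify the profile by removing the term $K^{tj}Z^j$ from member $t$'s strategy and adding the term $\Lambda^{kjt}K^{tj}Z^j$ (with $\Lambda^{kjt}$ as in Lemma~\ref{lem_team}) to member $k$'s strategy; because $Z^j\subset_v Z^k$, the added term is a linear function of $Z^k$, so the modified profile is still linear. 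I would then verify, by induction on the member index, that this modification changes neither any realized information vector $Z^m$ nor the realized value of $M\Xi+NU$, and hence not the expected cost: only the actions of $t$ and $k$ change, by $-v$ and by $w:=\Lambda^{kjt}v$ respectively, so that the change in $NU$ is $-N^tv+N^kw=0$ by \eqref{subs_condition_1}; and for each $m$ the change in $Z^m$ consists of $-D^{mt}v$ (present only when $t<m$) together with $D^{mk}w$ (present only when $k<m$), which vanishes in every case: if $m>\max(t,k)$ the two contributions cancel by \eqref{subs_condition_2}; if $m\le k$ then $D^{mk}=0$ by the causality constraint \eqref{fact_team}, whence \eqref{subs_condition_2} forces $D^{mt}=0$; and symmetrically, if $m\le t$ then $D^{mt}=0$ and \eqref{subs_condition_2} forces $D^{mk}w=0$.

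It remains to iterate this step and to argue termination. Here I would track the potential equal to the number of pairs $(i,j)$ with $j\in\mathcal{C}^i$ such that member $i$'s current strategy still depends on $Z^j$; it equals $\sum_{i\in\mathcal{M}}\lvert\mathcal{C}^i\rvert$ for $\boldsymbol\gamma_0$ and drops by exactly one at each elementary step, since member $t$ loses its dependence on $Z^j$ while member $k$'s new term, being a function of $Z^k$, does not enlarge $k$'s set of ``extra'' dependences. After finitely many steps the potential is zero, i.e.\ every member's strategy is a linear function of $Z^i$ alone; writing it as $U^i=\Gamma^iZ^i$, this profile is feasible for Problem~\ref{Prob_1}, it has the same cost as $\boldsymbol\gamma_0$ --- namely the optimal value of Problem~\ref{Prob_2} --- and by the lower bound noted above it is therefore optimal for Problem~\ref{Prob_1}.

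The main obstacle I anticipate is exactly the invariance of the realized $Z^m$'s under the elementary step. The naive worry is that when the substituting member $k$ acts after $t$, removing a term from $U^t$ perturbs $Z^{t+1},\dots,Z^k$ before the compensating term in $U^k$ is applied, and such a perturbation would cascade through all downstream strategies and corrupt the cost. The resolution is the observation above that \eqref{subs_condition_2} together with \eqref{fact_team} forces $D^{mt}=0$ for every $m\le k$, i.e.\ member $t$'s removed term cannot affect the information of any member acting before $k$; making this rigorous requires setting up the induction over member indices carefully, so that at the step for $Z^m$ all earlier actions are already known to be either unchanged or changed by the prescribed amounts.
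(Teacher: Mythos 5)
Your proposal is correct and follows essentially the same route as the paper's proof: starting from the optimal linear strategy in the partially nested expansion, iteratively transferring each violating term $K^{tj}Z^j$ to a substituting member via $\Lambda^{kjt}$, and proving by induction on the member index that every realized $Z^m$ and the quantity $NU$ (hence the cost) are unchanged, with termination by a decreasing violation count. Your case analysis using \eqref{fact_team} together with \eqref{subs_condition_2} to kill the intermediate perturbations is exactly the argument the paper uses in its Appendix A (Claim 1).
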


\vspace{-4mm}
\subsection{Proof of Theorem \ref{theorem_1}}
Let $\boldsymbol{\gamma}_0 := \{\gamma^1_0,\ldots, \gamma^n_0 \}$ be an optimal team strategy of Problem \ref{Prob_2} as given in \eqref{opt_st_p2}. These strategies \textcolor{black}{may} violate the information structure of Problem \ref{Prob_1}
since they use some information not available to members in Problem \ref{Prob_1}. 
For $i \in \mathcal{M}$ and $r \in \mathcal{C}^i$,  we say that $\gamma^i_0$ uses $Z^r$ if the matrix $K^{ir}_0 \neq 0$. We define $\mathcal{E}^i_0 = \lbrace r : r \in \mathcal{C}^i$ and $\gamma^i_0$ uses $Z^r \rbrace$. The cardinality of this set is referred to as the number of information structure violations of the strategy $\gamma^i_0$. Clearly,  $\left\vert{\mathcal{E}^i_0}\right\vert \leq \left\vert{\mathcal{C}^i}\right\vert$.

We will use the optimal team strategy  $\boldsymbol \gamma_0$ in Problem \ref{Prob_2} to construct an optimal team strategy in Problem \ref{Prob_1}. We will proceed iteratively. At each step of the iteration, we will construct an equivalent team strategy for which the total number of information structure violations across all members is one less than the previous team strategy.  This iterative process, described in \textcolor{black}{Algorithm 1}, can be summarized as follows: At the beginning of $l^{th}$ iteration, we are given a linear team strategy $\boldsymbol \gamma_l$ of Problem \ref{Prob_2}.  We consider members $t$ and $s$ such that $s \in \mathcal{C}^t$ and $\gamma_l^t$ uses $Z^s$. This represents an information structure violation for Problem \ref{Prob_1}. We carry out a strategy transformation, referred to as Procedure 1 and described in detail below, to obtain a new team strategy $\boldsymbol \gamma_{l+1}$ which has the same performance as $\boldsymbol\gamma_l$ but $\gamma^t_{l+1}$ does not use $Z^s$. Thus, the number of information structure violations is reduced by one. The $0^{th}$ iteration starts with  the optimal team strategy of Problem \ref{Prob_2} as given in \eqref{opt_st_p2}. The process terminates after $l^* = 
\sum_{j=1}^n \left\vert{\mathcal{E}^j_0}\right\vert
 \leq \sum_{j=1}^n \left\vert{\mathcal{C}^j}\right\vert$ iterations at which point the number of information structure violations has been reduced to $0$. 
 As a result, in the team strategy 
$\boldsymbol{\gamma}_{l^*}$, $\gamma_{l^*}^i$ only uses $Z^i$ (which is available to 
member $i$ in Problem 1). Furthermore, $\boldsymbol{\gamma}_{l^*}$ has the same performance as $\boldsymbol{\gamma}_{0}$. 
Therefore, the team strategy $\boldsymbol{\gamma}_{l^*}$ is optimal for Problem 1 and for each team member $i$, $\gamma_{l^*}^i$
 is linear in the information of member~$i$. 
\vspace{-3mm}
\begin{algorithm}
\begin{small}
\caption{}\label{euclid}
\begin{algorithmic}[1]
\State $l=0$;  (Iteration Number)
\State $\boldsymbol{\gamma}_0$ from \eqref{opt_st_p2};
\State $\mathcal{E}^i_0 = \lbrace r : r \in \mathcal{C}^i$ and $\gamma^i_0$ uses $Z^r \rbrace$,  $i=1,\ldots,n$;
\State \textbf{for} $t = 1$ to $n$
\State \hspace{0.2cm}\textbf{while} $\left\vert{\mathcal{E}^t_{l}}\right\vert \neq 0$
\State \hspace{0.3cm}$s = \min \lbrace r : r \in \mathcal{E}^t_l  \rbrace$; 
\State \hspace{0.3cm}Find new team strategy $\boldsymbol\gamma_{l+1}$ from $\boldsymbol\gamma_{l}$ according to \textbf{Procedure 1};
\State \hspace{0.3cm}$\mathcal{E}^t_{l+1} := \mathcal{E}^t_l \setminus \lbrace s \rbrace$;
\State \hspace{0.3cm}$\mathcal{E}^j_{l+1} := \mathcal{E}^j_l $, $j =1,\ldots, n, ~j \neq t$;
\State \hspace{0.3cm}$l= l+1$;
\State \hspace{0.2cm}\textbf{end while}
\State \textbf{end for}
\end{algorithmic}
\end{small}
\end{algorithm} 
 
 \vspace{-3mm}
We now describe Procedure 1 and then show that it preserves the expected cost. 
\textbf{Procedure 1:}
Given linear team strategy $\boldsymbol{\gamma}_l$ in Problem \ref{Prob_2} and team members $t$ and $s$ such that $s \in \mathcal{C}^t$ and $\gamma_l^t$ uses $Z^s$.  The set $\mathcal{E}^t_l = \lbrace r : r \in \mathcal{C}^t$ and $\gamma^t_l$ uses $Z^r \rbrace$ represents the information structure violations for member $t$ under $\gamma^t_l$. 

The strategy of member $t$ can be written as:
\begin{align}
\label{l_0_1}
\gamma^t_l (\tilde{Z}^t) = K^{tt}_l Z^t + \sum_{j \in \mathcal{E}^t_l \setminus \{s\} } K^{tj}_l Z^j + K^{ts}_l Z^s.
\end{align}

\vspace{-1mm}
According to Assumption \ref{asum_subs_team}, there exists a substituting member $k$ for the critical pair $(s,t)$. We construct new strategies for members $t$ and $k$ as follows:
\begin{align}
\label{l_0_2}
\gamma^t_{l+1}(\tilde{Z}^t) &= \gamma^t_{l}(\tilde{Z}^t) - K^{ts}_l Z^s = K^{tt}_l Z^t + \sum_{j \in \mathcal{E}^t_l \setminus \{s\} } K^{tj}_l Z^j, 
\end{align}
\vspace{-8mm}
\begin{align}\label{l_0_2_a}
\hspace{-30mm} \gamma^k_{l+1}(\tilde{Z}^k) &= \gamma^k_{l}(\tilde{Z}^k) + 
\Lambda^{kst}
K^{ts}_l Z^s
 \nonumber \\
&= K^{kk}_l Z^k + \sum_{j \in \mathcal{E}^k_l} K^{kj}_l Z^j +
\Lambda^{kst}
K^{ts}_l Z^s \nonumber \\
& = K^{kk}_{new} Z^k + \sum_{j \in \mathcal{E}^k_l} K^{kj}_l Z^j
\end{align}
where  $K^{kk}_{new}$ is derived from $K^{kk}_l$ and $\Lambda^{kst}K^{ts}_l$ because $Z^s \subset_v Z^k$. 

At the end of the procedure, the strategies of members can be written as follows,
\begin{itemize}
\item For member $t$, $\gamma^t_{l+1}(\tilde{Z}^t) = K^{tt}_{l+1} Z^t + \sum_{j \in \mathcal{E}^t_l \setminus \{s\} } K^{tj}_{l+1} Z^j$ where $K^{tt}_{l+1} = K^{tt}_l$ and $K^{tj}_{l+1} = K^{tj}_{l}$ for ${\small j \in \mathcal{E}^t_l \setminus \{s\}}$.
\vspace{5pt}
\item For member $k$, $\gamma^k_{l+1}(\tilde{Z}^k) = K^{kk}_{l+1} Z^k + \sum_{j \in \mathcal{E}^k_l} K^{kj}_{l+1} Z^j$ \\where $K^{kk}_{l+1} = K^{kk}_{new}$ and $K^{kj}_{l+1} = K^{kj}_{l}$ for $j \in \mathcal{E}^k_l$.
\vspace{2pt}
\item For all other members $r \in \mathcal{M} \setminus \{t,k\}$, \\ 
$\gamma^r_{l+1}(\tilde{Z}^r) = K^{rr}_{l+1} Z^r + \sum_{j \in \mathcal{E}^r_l} K^{rj}_{l+1} Z^j = \gamma^r_{l} (\tilde{Z}^r)$ 
\\ where $K^{rr}_{l+1} = K^{rr}_l$ and $K^{rj}_{l+1} = K^{rj}_{l}$ for $j \in \mathcal{E}^r_l$.
\end{itemize}
%

By construction, member $t$'s new strategy is no longer using $Z^s$ while every other member's new strategy is using the same information as before. Thus, the total number of information structure violations has been reduced by one. 
 ~\\
 
\vspace{-2mm}
To show that Procedure 1 preserves the expected cost, we need to show that the team strategy $\boldsymbol\gamma_{l+1}$ achieves the same expected cost as the team strategy  $\boldsymbol\gamma_{l}$. We start with the following claim.

\begin{claim}
\label{claim_u}
Let  us denote the team decision under team strategies $\boldsymbol\gamma_{l+1}$ and $\boldsymbol\gamma_{l}$ in Problem \ref{Prob_2} by $U\big|_{\boldsymbol\gamma_{l+1}}$ and $U\big|_{\boldsymbol\gamma_{l}}$ respectively. Then,
\begin{align}
\label{claim_performance_an}
NU\big|_{\boldsymbol\gamma_{l+1}} 
= NU\big|_{\boldsymbol\gamma_{l}} .
\end{align}
\end{claim}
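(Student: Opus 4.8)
The plan is to track how the team decision vector $U = \VVEC(U^1,\ldots,U^n)$ changes when we pass from $\boldsymbol\gamma_l$ to $\boldsymbol\gamma_{l+1}$, and to show that the change is annihilated by $N$ because it is a ``substitution'' in the sense of Assumption~\ref{asum_subs_team}. The key observation is that the passage to $\boldsymbol\gamma_{l+1}$ only alters the strategies of members $t$ and $k$; all other members $r$ keep the same strategy. Moreover, the only quantities that matter here are the \emph{realized decisions}, so I would argue by induction on the member index, processing members in the order $1, 2, \ldots, n$ in which they act, and showing that the realized decision of each member is the same under $\boldsymbol\gamma_{l+1}$ and $\boldsymbol\gamma_l$ for all members except possibly $t$ and $k$, while for $t$ and $k$ the pair $(U^t, U^k)$ changes in a way that leaves $N^t U^t + N^k U^k$ invariant.

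First I would note that, because $s \in \mathcal{C}^t$, member $s$ is a precedent of member $t$, hence $s < t$; and the substituting member $k$ has $Z^s \subset_v Z^k$, so $s$ is also a precedent of $k$, hence $s < k$. The information-structure violation being removed involves member $t$ only in the term $K^{ts}_l Z^s$, which is transferred (after applying $\Lambda^{kst}$) to member $k$. Now I would establish, going through members $1,\ldots,n$ in order: for every member $r \notin \{t,k\}$, the information $\tilde Z^r$ it receives depends only on the exogenous $\Xi$ and on decisions $U^j$ with $j < r$; by the induction hypothesis those earlier decisions are unchanged when they feed into $\tilde Z^r$ (the only decisions that could differ are $U^t$ and $U^k$, but I must check these do not enter $\tilde Z^r$ in a way that changes it — this is where I'd use that, by \eqref{subs_condition_2}, $D^{mt} U^t\big|_{\boldsymbol\gamma_{l+1}} = D^{mt} U^t\big|_{\boldsymbol\gamma_l}$ would fail since $U^t$ itself changed, so I need the more careful statement that the \emph{combined} contribution $D^{mt}U^t + D^{mk}U^k$ to any downstream $\tilde Z^m$ is unchanged). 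Concretely, using $u^k = \Lambda^{kst} u^t$ as in Lemma~\ref{lem_team}, equations \eqref{subs_condition_1}–\eqref{subs_condition_2} give, for the ``transferred'' piece $\delta := K^{ts}_l Z^s$,
\begin{align*}
N^t(-\delta) + N^k(\Lambda^{kst}\delta) = -N^t\delta + N^t\delta = 0, \qquad D^{mt}(-\delta) + D^{mk}(\Lambda^{kst}\delta) = 0 \quad \forall m \in \mathcal{M}.
\end{align*}
The second identity shows that every downstream information vector $\tilde Z^m$, and hence every downstream decision, is unaffected; combined with the induction this yields that $U^j\big|_{\boldsymbol\gamma_{l+1}} = U^j\big|_{\boldsymbol\gamma_l}$ for all $j \notin \{t,k\}$, and that $U^t$ changes by $-\delta$ and $U^k$ by $\Lambda^{kst}\delta$ (no other changes propagate into their arguments because $s < t$, $s < k$ and the strategies of earlier members are unchanged).

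Finally I would assemble the difference: $NU\big|_{\boldsymbol\gamma_{l+1}} - NU\big|_{\boldsymbol\gamma_l} = \sum_i N^i (U^i\big|_{\boldsymbol\gamma_{l+1}} - U^i\big|_{\boldsymbol\gamma_l}) = N^t(-\delta) + N^k(\Lambda^{kst}\delta) = 0$ by the first displayed identity above. I expect the main obstacle to be the bookkeeping in the inductive step: one must argue carefully that rerouting $\delta$ from $t$ to $k$ does not create a cascade of changes — i.e. that although $U^t$ and $U^k$ themselves change, their change is ``invisible'' to the cost (via \eqref{subs_condition_1}) and to every member's information (via \eqref{subs_condition_2}), so no further decisions move. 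A subtle point worth flagging is that $t$ and $k$ may act in either order, and $k$ might itself be a downstream member whose own (unchanged) strategy is evaluated at an (unchanged) argument $\tilde Z^k$ — this is fine precisely because $\delta$ is built from $Z^s$ with $s<k$, so $\delta$ is already determined before $k$ acts.
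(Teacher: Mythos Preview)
Your proposal is correct and follows essentially the same approach as the paper: an induction on the member index showing that every $Z^r$ (hence every $\tilde Z^r$) is unchanged, using \eqref{subs_condition_2} to cancel the combined contribution of the altered actions $U^t$ and $U^k$ to downstream information, and then applying \eqref{subs_condition_1} to conclude $NU$ is unchanged. One small correction: the inference ``$Z^s \subset_v Z^k$, so $s$ is a precedent of $k$, hence $s<k$'' is not valid in general (indeed Remark~1 allows $k=s$), but you never actually need $s<k$ --- all that is used is $Z^s \subset_v Z^k$, which is exactly what Assumption~\ref{asum_subs_team} provides, together with the inductive fact that $Z^s$ itself is the same under both strategies.
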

\begin{proof}
See Appendix \ref{proof_claim}.
\end{proof}
\begin{rem}
Under the team strategies $\boldsymbol\gamma_{l+1}$ and $\boldsymbol\gamma_{l}$, $U\big|_{\boldsymbol\gamma_{l+1}}$ and $U\big|_{\boldsymbol\gamma_{l}}$ are linear functions of $\Xi$ and are, therefore, well-defined random vectors. The  equality in \eqref{claim_performance_an} should be interpreted as follows: for every realization $\xi$ of $\Xi$, the realizations of $NU\big|_{\boldsymbol\gamma_{l+1}} $ and $NU\big|_{\boldsymbol\gamma_{l}} $ are equal to each other.
\end{rem}

Based on Claim \ref{claim_u}, the following equality holds for every realization of the random vectors involved,
\begin{align}
\label{claim_condition}
M\Xi + NU\big|_{\boldsymbol\gamma_{l+1}} 
= M\Xi + NU\big|_{\boldsymbol\gamma_{l}}.
\end{align}
Consequently, the expected costs under the team strategies $\boldsymbol\gamma_{l+1}$ and $\boldsymbol\gamma_{l}$ are identical.
\vspace{-2mm}
\subsection{Discussion} \label{sec:discussion}
{\color{black}

The proof of Theorem \ref{theorem_1} shows that under Assumptions \ref{asum_subs_team} and \ref{linear_strategy_P2} an optimal team strategy in the partially nested Problem \ref{Prob_2} that violates the information structure of Problem \ref{Prob_1} can be \emph{transformed into an equivalent strategy that can be implemented in Problem \ref{Prob_1}}. The idea of utilizing optimal strategies in a partially nested expansion of a team problem to construct equivalent strategies in the original problem was used in \cite{chu1971team} as well. We paraphrase \cite[Theorem 2]{chu1971team} below:

\emph{\cite[Theorem 2]{chu1971team}:} 
Consider the setup of Problem \ref{Prob_1} but without Assumption \ref{asum_subs_team}.
 Let $\boldsymbol{\gamma_0} := (\gamma^1_0,\ldots, \gamma^n_0)$ be an optimal team strategy in its partially nested expansion.  We will assume that $\boldsymbol{\gamma_0}$ violates the information structure of Problem \ref{Prob_1}\footnote{This assumption is not made in \cite{chu1971team}. But it is clear that if $\boldsymbol{\gamma_0}$ does not violate the information structure of Problem \ref{Prob_1}, then it is an optimal strategy in that problem and no further construction is needed.}.  Under this team strategy, let $p^{i*}$ be the \emph{composite control function} from $\Xi$ to $U^i$ defined such that $p^{i*}(\Xi) = \gamma^i_0(Z^i)$.   Define functions $g^i: \mathbb{R}^{d_{\xi}} \mapsto \mathbb{R}^{d_{z}^i}$ for $i \in \mathcal{M}$ as
\begin{align} \label{eq:g_function}
 g^i(\xi) &:= \eta^i(\xi, p^{1*}(\xi),p^{2*}(\xi),\ldots,p^{i-1*}(\xi)) \notag
 \\
&:= H^i \xi + \sum_{j <i} D^{ij} p^{j*}(\xi).
\end{align}

\vspace{-3mm}
Suppose there exist functions  
$\boldsymbol{r} = (r^{1}, \ldots, r^{n})$ where $r^{i}: \mathbb{R}^{d_{z}^i} \mapsto \mathbb{R}^{d_{u}^i}$ for $i \in \mathcal{M}$ such that,
\begin{align}
p^{i*}(\Xi) = r^i (g^i(\Xi)), \quad \forall i \in \mathcal{M}. \label{eq:r_func}
\end{align}
Then,  \cite[Theorem 2]{chu1971team} states that $\boldsymbol{r}$ is an optimal team strategy for the original non-partially-nested problem.

In comparing our result to \cite[Theorem 2]{chu1971team}, the following key observations can be made:
\begin{enumerate}
\item The substitutability assumption required for our result is a condition placed on the information structure of Problem \ref{Prob_1} and on the parameters in the cost and observation equations (namely, the matrices $N^i,D^{ij}$, $i,j\in
\mathcal{M}$).  The condition required for the result in \cite{chu1971team}, on the other hand, is  a requirement that an optimal team strategy in the partially nested expansion must satisfy. Clearly, our result and the result in \cite{chu1971team} require conditions of very different nature.

\item Using an optimal strategy $\boldsymbol{\gamma_{0}}$ in the expanded structure, the result in \cite{chu1971team} constructs a team strategy $\boldsymbol{\gamma_{eq}}$ for the original team problem in a manner that ensures that for each $i \in \mathcal{M}$, $U^i\big|_{\boldsymbol\gamma_{0}} 
= U^i\big|_{\boldsymbol\gamma_{eq}}$.  In contrast, the strategies constructed in our proof ensure that $NU\big|_{\boldsymbol\gamma_{l+1}} 
= NU\big|_{\boldsymbol\gamma_{l}} $ (see Claim 1).  In other words, the transformation in \cite{chu1971team} ensures that each member's action $U^i$ is the same random variable under the original and transformed strategies. In contrast,  under our transformation,  a member's action may become a different random variable but the combined effect of the team members' actions on cost as captured by the term $NU$ remains unchanged.



\item We next consider the following question: Suppose we solve Problem \ref{Prob_2} under Assumptions 1 and 2 and find an optimal team strategy\footnote{There may be many optimal strategies. We pick one arbitrarily.} $\boldsymbol{\gamma_0}$ that violates the information structure of Problem \ref{Prob_1}. We then construct the composite control functions $p^{i*}, i \in \mathcal{M},$ under $\boldsymbol{\gamma_0}$.  Then, do there always exist functions $\boldsymbol{r} = (r^{1}, \ldots, r^{n})$ satisfying \eqref{eq:r_func}? In other words, are our assumptions sufficient conditions for $\boldsymbol{\gamma_0}$ to satisfy the conditions imposed in \cite[Theorem 2]{chu1971team}? The answer is no as the following  examples demonstrate:

\begin{examp}
\label{Comparison}
Consider Problem 1 where the control actions are one-dimensional, $\Xi \sim \mathcal{N}(0, 1)$, and
\vspace{-1mm}
\begin{align}
\mathcal{M}&= \{1,2,3\}, \quad N = \begin{bmatrix} 0 & 1 & 1  \end{bmatrix}, \quad M = 1 \notag \\
& \hspace{6mm} Z^1= \Xi, \quad Z^2 = U^1, \quad Z^3 = \Xi.
\label{comparison_ex_model}
\end{align}
It is straightforward to see that  $(1,2)$ is a critical pair and that member $3$ is a substituting member for this critical pair.  The information structure for the partially nested expansion of this example is 
\begin{align}
\label{inf_structure_PNX}
\tilde{Z}^1 = Z^1= \Xi, 
\tilde{Z}^2 = 
 \begin{bmatrix}
Z^2 \\ Z^1
\end{bmatrix} =
\begin{bmatrix}
U^1 \\ \Xi
\end{bmatrix},
 \tilde{Z}^3 = Z^3 = \Xi.
\end{align}
An optimal strategy in the partially nested expansion is 
\begin{align}
\label{eq_optimal_static_example1}
&U^1 =\gamma_0^1(\tilde{Z}^1) = 0,  \hspace{8mm} U^2 =\gamma_0^2(\tilde{Z}^2) = -0.5 Z^1, \notag \\
& \hspace{18mm} U^3 =\gamma_0^3(\tilde{Z}^3) = -0.5 Z^3.
\end{align}
The above strategy results in the lowest possible expected cost of $0$.
The composite control functions under the above strategy are 
\vspace{-2mm}
\begin{align}
p^{1*}(\Xi) = 0, \quad p^{2*}(\Xi) =  -0.5 \Xi, \quad p^{3*}(\Xi) =   -0.5 \Xi,
\end{align}
and the functions $g^i$ defined in \eqref{eq:g_function} are
\begin{align}
g^{1}(\Xi) = \Xi, \quad g^{2}(\Xi) =  0, \quad g^{3}(\Xi) = \Xi.
\end{align}
It is clear that there is no function $r^2:\mathbb{R} \mapsto \mathbb{R}$ such that 
\begin{align}
p^{2*}(\Xi) = r^2 (g^2(\Xi)). \label{eq:r_func2}
\end{align}
\end{examp}

In the above example, one can easily find other optimal strategies in the partially nested expansion that would satisfy the conditions of \cite[Theorem 2]{chu1971team}.  The point  we wish to make is that for an example that meets our assumptions,  there may be some  optimal strategies in the partially nested expansion that do not satisfy the conditions of \cite[Theorem 2]{chu1971team}. Our Algorithm 1, on the other hand, works with any linear optimal strategy in the partially nested expansion.

\begin{examp}\label{Comparison2}
Consider Problem 1 where the control actions are one-dimensional,  $ \Xi = \begin{bmatrix} \Xi^1 \\ \Xi^2 \end{bmatrix} \sim \mathcal{N}(0, \begin{bmatrix} 2 & -1 \\ -1 & 2  \end{bmatrix})$ and 
\begin{align}
\mathcal{M}&= \{1,2,3\}, \quad N = \begin{bmatrix} 2 & -1 & -1 \\ 1 & 2 & 2  \end{bmatrix}, 
\quad M =  \begin{bmatrix} 1 & 0 \\ 0 & 1  \end{bmatrix}
\notag \\
& \hspace{6mm} Z^1= \Xi^2, \quad Z^2 = U^1, \quad Z^3 = \Xi^2.
\label{comparison_ex_model_2}
\end{align}
It is straightforward to see that  $(1,2)$ is a critical pair and that member $3$ is a substituting member for this critical pair.  The information structure for the partially nested expansion of this example is 
\begin{align}
\label{inf_structure_PNX_2}
\tilde Z^1 = Z^1= \Xi^2, 
\tilde Z^2 = 
 \begin{bmatrix}
Z^2 \\ Z^1
\end{bmatrix} =
\begin{bmatrix}
U^1 \\\Xi^2
\end{bmatrix}, 
\tilde Z^3 = Z^3 = \Xi^2.
\end{align}
Since this information structure is partially nested,  it is equivalent to a static team with the following information structure,
\begin{align}
\label{inf_structure_S}
\hat Z^1 = \Xi^2, \quad 
\hat Z^2 =\Xi^2, \quad 
\hat Z^3 = \Xi^2.
\end{align}
According to Remark \ref{rem:static}, the linear team strategy $U^i = \gamma^i(\hat{Z}^i) = \Pi^i \hat{Z}^i = \Pi^i \Xi^2$ for $i \in \mathcal{M}$ is optimal if the following linear system of equations has a solution for $\Pi^i$, $i \in \mathcal{M}$,
\begin{align}
\label{equation_system_example}
10\Pi^1 = 0, \quad  10\Pi^2 + 10\Pi^3 = -5.
\end{align}
One solution provides the following strategies:
\begin{align}
\label{optimal_static_example}
U^1 = 0, \quad U^2 =  \Xi^2, \quad U^3 = -1.5 \Xi^2.
\end{align}
\eqref{optimal_static_example} can be written as follows under the information structure of \eqref{inf_structure_PNX_2},
\begin{align}
\label{eq_optimal_static_example}
&U^1 =\gamma_0^1(\tilde{Z}^1) = 0,  \hspace{8mm} U^2 =\gamma_0^2(\tilde{Z}^2) =  Z^1, \notag \\
& \hspace{12mm} U^3 =\gamma_0^3(\tilde{Z}^3) = -1.5 Z^3.
\end{align}
The team strategies of \eqref{eq_optimal_static_example} cannot be implemented in the original non-partially-nested information structure because $U^2$ uses $Z^1$ while $Z^1 \not \subset_v Z^2$.
We now follow the procedure of Algorithm 1 and use $\boldsymbol{\gamma}_0 = (\gamma_0^1,\gamma_0^2, \gamma_0^3)$ from \eqref{eq_optimal_static_example} to find optimal team strategies that can be implemented in the original information structure. Since there is only one information structure violation under $\boldsymbol{\gamma}_0$, we obtain the desired strategies after one iteration: 
\begin{align}
\label{1st_iteration_example}
&U^1 =\gamma_1^1(\tilde{Z}^1) = 0,  \hspace{8mm} U^2 =\gamma_1^2(\tilde{Z}^2) = 0, \notag \\
& \hspace{0mm} U^3 =\gamma_1^3(\tilde{Z}^3) =  -0.5 Z^3.
\end{align}

To compare our approach with that of \cite{chu1971team}, note that the composite control functions under $\boldsymbol{\gamma}_0$ are 
\begin{align}
p^{1*}(\Xi) = 0, \quad p^{2*}(\Xi) =   \Xi^2, \quad p^{3*}(\Xi) =   -1.5 \Xi^2.
\end{align}

and the functions $g^i$ defined in \eqref{eq:g_function} are
\begin{align}
g^{1}(\Xi) = \Xi^2, \quad g^{2}(\Xi) =  0, \quad g^{3}(\Xi) = \Xi^2.
\end{align}
Clearly, there is no function $r^2$ such that $p^{2*}(\Xi)  = r^2 (g^2(\Xi) )$. 
\end{examp}
\item Under our assumptions, the strategies $\Gamma^iZ^i$, $i \in \mathcal{M}$, of Theorem \ref{theorem_1} are optimal for both Problems 1 and 2. If these are used as $\boldsymbol{\gamma_{0}}$ in \cite[Theorem 2]{chu1971team}, then it can be shown that $r^i=\Gamma^i$ will satisfy  \eqref{eq:r_func}.  Of course, if we know $\Gamma^i$, $i \in \mathcal{M}$, already, then there is no need to carry out the transformation of \cite[Theorem 2]{chu1971team}.
\item  Finally,  \cite[Problem A in Section IV]{chu1971team} presents an example where the conditions of \cite[Theorem 2]{chu1971team} hold, but the substitutability assumption does not.  Thus, our assumptions do not provide necessary conditions for the conditions imposed in \cite[Theorem 2]{chu1971team}.

\end{enumerate}
The core idea of substitutability is also conceptually different from the conditional independence related properties of stochastic nestedness \cite{Yuksel:2009} and P-quasiclassical information structures \cite{mahajan_yuksel} that have been used for some non-partially-nested problems. In the non-partially-nested models of \cite{Yuksel:2009} and \cite{mahajan_yuksel}, one can identify an agent's ``missing information'' that prevents the information structure from being partially nested, i.e., if the agents knew their missing information, then the information structure would be partially nested.  These papers then rely on conditional independence like properties (of the relevant cost or state variables) to argue that given an agent's actual information, the missing information is irrelevant for making optimal decisions. We believe that this is very different from the essence of substitutability. Under our assumptions, it is not  the case that  the missing information  of agent $i$ is irrelevant for its decision. It's just that there is another agent present that knows the information missing at agent $i$ and can reproduce any effects on cost and observations that agent $i$ could have produced had it known its missing information. 
Let's reconsider Example \ref{Comparison} where the information structure is 
 $Z^1= \Xi, \quad Z^2 = U^1, \quad Z^3 = \Xi$. Suppose the following strategies are being used under this information structure:
\begin{align}
&U^1 =\gamma^1({Z}^1) = 0,  U^2 =\gamma^2({Z}^2) = 1, U^3 =\gamma^3({Z}^3)  =0.
\end{align}
Under the above strategies, the conditional expectation of the cost terms that involve $U^2,$ given $Z^2,U^2,$ can be computed to be
\begin{align}
\ee^{\boldsymbol{\gamma}}\left[
      U^2U^2 + 2U^2\Xi + 2U^2U^3 \mid Z^2,U^2\right] = 1.
\end{align}
On the other hand, if the same expectation is computed given $Z^1,U^1,Z^2,U^2$, we get
\begin{align}
\ee^{\boldsymbol{\gamma}}\left[
      U^2U^2 + 2U^2\Xi + 2U^2U^3 \mid Z^1,U^1,Z^2,U^2\right] = 1+2\Xi.
\end{align}
If the information structure was P-quasiclassical, then the two conditional expectations above should have been identical (see \cite[Definition 2]{mahajan_yuksel}). 
This demonstrates that Example \ref{Comparison} violates the definition of P-quasiclassical information structures  even though it satisfies our substitutability assumption.
}

\section{ Substituability in Decentralized LQG Control}\label{sec:OF}
\textcolor{black}{
Decentralized control problems in discrete time can be viewed as dynamic team problems by viewing a controller's actions at different time instants as the actions of distinct team members \cite{HoChu:1972}. Thus,  a decentralized control problem with $n$ controllers acting over a time horizon of duration $T$ can be seen as a dynamic team with $nT$ members, each member responsible for one control action. We will denote the team member corresponding to controller $i$'s action at time $t$ as \emph{member $i.t$}. We can then verify whether this dynamic team satisfies the assumptions of Section~\ref{dynamic_team} and if it does we can use an optimal team strategy in its partially nested expansion to find an optimal team strategy in the original team. The optimal strategy for member $i.t$ then naturally becomes the control strategy for controller $i$ at time $t$. Thus, non-partially-nested decentralized control problems whose dynamic team representations satisfy 
 Assumptions 1 and 2 can be solved using the analysis of Section~\ref{dynamic_team}. It is possible, however, to exploit the state structure in control problems to (a) simplify the verification of the substitutability assumption and (b) to find compact control strategies with recursively update-able sufficient statistics. We demonstrate this by considering the following problem.
 }

\subsection{System Model and Information Structure}\label{sec:lqgmodel}
We consider a decentralized control problem with $n$ controllers where
\begin{enumerate}
\item The state dynamics are given as 
\begin{equation}
\label{dynamics_eq}
X_{t+1} =AX_t + BU_t +W_t, \quad t=1,\ldots,T-1,
\end{equation}
where $X_t, W_t \in \mathbb{R}^{d_x}$, $U_t \in \mathbb{R}^{d_u}$ and $U_t = \VVEC(U^1_t,\ldots,U^n_t)$.

\item Each controller makes a noisy observation of the system state given as
\begin{align}
\label{equation:x28}
Y_{t}^i = C^{i}X_t + V_{t}^i, \hspace{10mm} i=1,\ldots,n.
\end{align}
 Combining (\ref{equation:x28}) for all controllers gives:
\begin{align}
\label{equation:x281}
Y_{t} = C
X_t + V_{t},
\end{align}
where $Y_t$ denotes $\textsl{vec}(Y_{t} ^1, Y_{t} ^2, \ldots, Y_{t} ^n)$ and $V_t$ denotes $\textsl{vec}(V_{t} ^1, V_{t} ^2, \ldots, V_{t} ^n)$ and $C$ is a matrix composed of $C^1,\ldots,C^n$ as row blocks.
\end{enumerate}
 The initial state $X_1$ and the noise variables $W_t, t=1,\ldots,T-1,$ and $V_t, t=1,\ldots,T-1,$ are mutually independent and jointly Gaussian with the following probability distributions:
\begin{align*}
X_1 \sim \mathcal{N}(0,\Sigma_x), \hspace{5mm} W_t \sim \mathcal{N}(0,\Sigma_w), \hspace{5mm} V_t \sim \mathcal{N}(0,\Sigma_v).
\end{align*}
The information available to the $i^{th}$ controller at time $t$ is:
\begin{align}
\label{equation:x31}
I_t ^i &= \lbrace Y_{1:t} ^i, U_{1:t-1} ^i \rbrace \hspace{10mm} i=1,\ldots,n. 
\end{align}
Each controller $i$, chooses  its action $U_t ^i $ according to $U_t^i = g_t^i(I_t^i)$. 
The collection $g^i = (g_1^i , \ldots, g_T^i)$ is called the control strategy of controller $i$. 
The performance of the control strategies of all controllers, $\boldsymbol{g}= (g^1, \ldots, g^n)$, is measured by the total expected cost over a finite time horizon:
\begin{align}
\label{equation:xx4}
\mathcal{J}(\boldsymbol{g})= \ee^{\boldsymbol{g}}\left[\sum_{t=1}^{T} 
       (MX_t + NU_t)^{\tp}(MX_t + NU_t)\right].
\end{align}

The optimization problem is defined as follows.
\begin{pb}
\label{Prob_3}For the model described above, find control strategy $\boldsymbol{g}= (g^1, \ldots, g^n)$ that minimizes the expected cost given by (\ref{equation:xx4}). 
\end{pb}

\subsection{Substitutability Assumption}
We  make the following assumption about the system.
\begin{assum}\label{asm:2}
For every vector $u =\VVEC(u^1,u^2,\ldots,u^n)$, there exist control actions $v^i = l^i(u)$ for controller $i$, $i=1,\ldots,n$, such that
\begin{equation}
\label{ass_1}
Bu = B\begin{bmatrix} 0 \\ \vdots \\v^i \\ \vdots \\ 0 \end{bmatrix} ~~\mbox{and}~~Nu = N\begin{bmatrix} 0 \\ \vdots \\v^i \\ \vdots \\  0 \end{bmatrix}.
\end{equation}
\end{assum}
We can write the $B$ and $N$ matrices in terms of their blocks as 
\[ B = \begin{bmatrix} B^1 &\ldots &B^n\end{bmatrix}, \quad  N = \begin{bmatrix} N^1 &\ldots  &N^n\end{bmatrix}. \]
An example of a system satisfying Assumption \ref{asm:2} is a two-controller LQG problem where the dynamics and the cost are functions only of the sum of the control actions, that is, ($u_t^1 + u_t^2$). This happens if $B^1 = B^2$ and $N^1= N^2$. In this case, using $v_t^1 = v_t^2 = u_t^1 + u_t^2$ satisfies \eqref{ass_1}.

{\color{black}
\begin{rem} More generally, Assumption \ref{asm:2} is satisfied iff the column spaces of  matrices 
$\begin{bmatrix}
B^i \\
N^i
\end{bmatrix}$, 
$i=1,\ldots,n,$ are identical.
\end{rem}

\begin{rem}
The substitutability assumption above (Assumption \ref{asm:2}) is really just a  compact representation of the substitutability assumption  of Section \ref{dynamic_team} (Assumption \ref{asum_subs_team}) with a specified substituting member for each critical pair.  To see this, first note that in the dynamic team representation of the control problem members $i.s$ and $j.t$ form a critical pair when $j\neq i$ and $s<t$. Secondly, member $i.t$ knows all the information of member $i.s$. To show that member~$i.t$ is a substituting member for the critical pair $(i.s, j.t)$, we just need to argue that for any action $u^j_{t}$, we can find an action $u^i_{t}$ that produces the same effect on total cost and future observations. Since the effect of  $u^j_{t}$ on the   cost at time $t$  is only through the term  $N^ju^j_{t}$ and its effects on the future costs and observations are only through $B^ju^j_{t}$, it suffices to ensure that  for any $u^j_{t}$, there exists $u^i_{t}$ such that
 \[ N^ju^j_{t} =  N^iu^i_{t} \mbox{~and~} B^ju^j_{t} =  B^iu^i_{t}.\]
 Combining the above for all $j \neq i$ gives the substitutability conditions of Assumption \ref{asm:2}.
\end{rem}
}

The following lemma is immediate from the theory of pseudo-inverses \cite{ben2006generalized}.
\begin{lem}
\label{lem_3}
If a solution $v^i$ to \eqref{ass_1} exists, it can be written as $v^i = \Lambda^i u$,  where
\vspace{-3mm}
\begin{align}
\label{Lamba_eq}
\Lambda^i = \begin{bmatrix}
B^i \\
N^i
\end{bmatrix}^{\dagger}
\begin{bmatrix}
B \\
N
\end{bmatrix}.
\end{align}
\end{lem}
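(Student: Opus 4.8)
The plan is to recognize this as a direct instance of the standard characterization of when a linear system $Ax = b$ is consistent and what its solutions look like, specialized to the block system appearing in \eqref{ass_1}. First I would stack the two conditions in \eqref{ass_1} into a single matrix equation: for a fixed controller $i$ and a fixed $u$, the requirement that $Bu = B^i v^i$ and $Nu = N^i v^i$ is precisely
\[
\begin{bmatrix} B^i \\ N^i \end{bmatrix} v^i = \begin{bmatrix} B \\ N \end{bmatrix} u,
\]
since $B\,\VVEC(0,\ldots,v^i,\ldots,0) = B^i v^i$ and likewise $N\,\VVEC(0,\ldots,v^i,\ldots,0) = N^i v^i$ by the block structure of $B$ and $N$. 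So the claim reduces to: if this equation has any solution $v^i$, then $v^i = \left[\begin{smallmatrix} B^i \\ N^i \end{smallmatrix}\right]^{\dagger}\left[\begin{smallmatrix} B \\ N \end{smallmatrix}\right] u$ is one such solution.

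The key fact I would invoke is the standard pseudo-inverse result (e.g.\ from \cite{ben2006generalized}): for a matrix $A$ and vector $c$, the equation $Ax = c$ has a solution if and only if $A A^{\dagger} c = c$, and in that case $x = A^{\dagger} c$ is a solution (indeed the minimum-norm one). Applying this with $A = \left[\begin{smallmatrix} B^i \\ N^i \end{smallmatrix}\right]$ and $c = \left[\begin{smallmatrix} B \\ N \end{smallmatrix}\right] u$: the hypothesis of the lemma is exactly that a solution exists, hence $A A^{\dagger} c = c$, hence $v^i = A^{\dagger} c = \Lambda^i u$ with $\Lambda^i$ as in \eqref{Lamba_eq} solves the system. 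This gives the stated closed form.

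There is essentially no obstacle here — the only thing requiring a word of care is the bookkeeping that shows the "one-hot" vector $\VVEC(0,\ldots,v^i,\ldots,0)$ multiplied by the full matrices $B$ and $N$ collapses to $B^i v^i$ and $N^i v^i$, which is immediate from the block decomposition $B = [B^1 \ \cdots \ B^n]$, $N = [N^1 \ \cdots \ N^n]$. One might also remark, for completeness, that $\Lambda^i u$ need not be the \emph{unique} solution when $\left[\begin{smallmatrix} B^i \\ N^i \end{smallmatrix}\right]$ has a nontrivial null space — the general solution being $\Lambda^i u + (I - A^{\dagger}A)z$ for arbitrary $z$ — but the lemma only asserts that $\Lambda^i u$ \emph{is} a valid solution, which is all that is needed for the subsequent construction of substituting control actions.
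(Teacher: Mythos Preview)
Your proposal is correct and is essentially the same approach the paper takes: the paper simply states that the lemma ``is immediate from the theory of pseudo-inverses \cite{ben2006generalized}'' without writing out any details, and you have made explicit exactly the standard fact being invoked (that $Ax=c$ consistent implies $x=A^{\dagger}c$ is a solution). Your additional remarks about the block reduction and non-uniqueness are helpful elaborations but not departures from the paper's reasoning.
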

\vspace{-4mm}
\subsection{A Centralized Problem}
In order to solve Problem \ref{Prob_3}, we  \textcolor{black}{would like to consider a  partially nested expansion of its information structure. Because members $i.s$ and $j.t$ form a critical pair  in the dynamic team representation of the control problem when $j\neq i$ and $s<t$, a partially nested expansion must give controller $j$ at time $t$ all the information of controller $i$ at any time $s<t$.   A convenient expansion that meets this requirement is the information structure of the  centralized   problem described below.}
\begin{pb}
\label{Prob_4}
For the model described above, assume that the information available to each controller is
\begin{align}
\label{equation:xx5}
\tilde{I}_t = \lbrace Y_{1:t}, U_{1:t-1} \rbrace.
\end{align}
Controller $i$ chooses its action according to strategy $U^i_t = g^i_t(\tilde{I}_t)$. The objective is to select control strategies that minimize (\ref{equation:xx4}).
\end{pb}
The following lemma follows directly from the problem descriptions above and well-known results for the centralized LQG problem with output feedback \cite{kumar_varaiya}. 
\begin{lem} \label{lem_5}
\begin{enumerate}
\item The optimal cost in Problem \ref{Prob_4} (with centralized information structure) is a lower bound on the optimal cost in Problem \ref{Prob_3} (with decentralized information structure).
\item The optimal strategies in Problem \ref{Prob_4} have the form of $U_t = K_t Z_t$ where $Z_t = \ee [X_t \vert \tilde{I}_t]$. $Z_t$ evolves according to the following equations:
\vspace{-2mm}
\begin{align}
\label{equation:x37}
Z_1 &= L_1 Y_1 \nonumber \\
Z_{t+1} &= (I - L_{t+1}C)(AZ_t + BU_t) + L_{t+1}Y_{t+1}. 
\end{align}
The matrices $L_t, t=1,\ldots,T$ can be computed apriori from the problem parameters. 


\end{enumerate}
\end{lem}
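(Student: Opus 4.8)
The plan is to establish the two parts separately: part (1) by an elementary information-comparison argument, and part (2) by reducing Problem \ref{Prob_4} to a textbook centralized LQG problem with output feedback and invoking the separation principle.

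For part (1), compare \eqref{equation:x31} with \eqref{equation:xx5}: for every controller $i$ and every $t$ we have $I_t^i \subseteq \tilde I_t$, since $\tilde I_t$ contains all of $Y_{1:t}$ and $U_{1:t-1}$ and hence contains $Y_{1:t}^i$ and $U_{1:t-1}^i$. Therefore, given any strategy $\boldsymbol g = (g^1,\ldots,g^n)$ feasible for Problem \ref{Prob_3}, each controller in Problem \ref{Prob_4} can reproduce $g_t^i$ by discarding the extra components of $\tilde I_t$ and applying $g_t^i$ to the sub-collection $I_t^i$. Under these strategies the closed-loop process $(X_t,U_t)_{t=1}^T$ has the same joint distribution in Problem \ref{Prob_4} as $\boldsymbol g$ induces in Problem \ref{Prob_3} (the dynamics \eqref{dynamics_eq}, observations \eqref{equation:x281}, and noise statistics are identical, and one argues by induction on $t$), so it achieves the same value of \eqref{equation:xx4}. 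Taking the infimum over all decentralized $\boldsymbol g$ gives the claimed lower bound.

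For part (2), the first step is to observe that in Problem \ref{Prob_4} the information $\tilde I_t$ is common to all controllers, so it is equivalent to a centralized problem in which a single decision maker observes $\tilde I_t$ and selects the stacked action $U_t = \VVEC(U_t^1,\ldots,U_t^n)$. Expanding the per-stage cost in \eqref{equation:xx4} as $X_t^\tp M^\tp M X_t + 2\,U_t^\tp N^\tp M X_t + U_t^\tp N^\tp N U_t$ identifies this as a standard finite-horizon LQG output-feedback problem with state weight $M^\tp M \succeq 0$, control weight $N^\tp N \succeq 0$, cross-term $N^\tp M$, dynamics \eqref{dynamics_eq}, and observations \eqref{equation:x281}, all primitives being mutually independent and Gaussian. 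The second step is to invoke the separation principle and certainty equivalence for this class of problems \cite{kumar_varaiya}: because all past controls are measurable with respect to $\tilde I_t$, the conditional mean $Z_t = \ee[X_t\mid\tilde I_t]$ is generated by the Kalman filter, whose innovations recursion is precisely \eqref{equation:x37}, with gains $L_t$ obtained from the forward Riccati recursion for the estimation-error covariance, which, crucially, does not depend on the control strategy; and the optimal control is a linear feedback $U_t = K_t Z_t$ on this estimate, with $K_t$ obtained from the backward Riccati recursion built from $A$, $B$, $M^\tp M$, $N^\tp N$, and $N^\tp M$. Substituting $U_t = K_t Z_t$ into \eqref{equation:x37} makes $Z_t$ a recursively computable statistic, as asserted.

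The only real obstacle is technical rather than conceptual: the control weight $N^\tp N$ need not be positive definite --- this was already flagged for the cost \eqref{cost_team} --- whereas the textbook LQG statement typically assumes a nonsingular control weight. I would handle this either by a perturbation argument (replace $N^\tp N$ by $N^\tp N + \epsilon I$, solve the nonsingular problem, and let $\epsilon \downarrow 0$, noting that the resulting gains produce controls lying in a compact set of cost values), or by appealing directly to the form of the dynamic-programming recursion that uses the Moore-Penrose pseudo-inverse in place of the inverse of the control weight; in either case the value function stays quadratic in $Z_t$ and the optimal action retains the linear form $U_t = K_t Z_t$. A secondary routine point is that the cross-term $N^\tp M$ is absorbed by the usual completion of squares in the Bellman recursion.
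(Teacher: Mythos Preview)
Your proposal is correct and matches the paper's approach: the paper does not give a detailed proof but simply states that the lemma ``follows directly from the problem descriptions above and well-known results for the centralized LQG problem with output feedback'' \cite{kumar_varaiya}, which is precisely your part~(1) information-comparison argument together with your part~(2) appeal to the separation principle. Your treatment is in fact more careful than the paper's, since you explicitly flag and address the possible singularity of $N^{\tp}N$; the paper silently absorbs this into the textbook citation.
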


\vspace{-2mm}
\subsection{Main results}\label{sec:OF:mr}
In this section, we show that it is possible to construct optimal strategies in Problem \ref{Prob_3} from the optimal control strategy of Problem \ref{Prob_4}. 

\begin{theo}
\label{theorem_2}
Consider Problems \ref{Prob_3} and \ref{Prob_4}, and consider the optimal strategy, $U_t = K_t Z_t$,  of Problem \ref{Prob_4}. We write $L_{t+1}$ of Lemma \ref{lem_5} as $L_{t+1} = \begin{bmatrix} L_{t+1}^1 &L_{t+1}^2 &\ldots &L_{t+1}^n \end{bmatrix}$. The optimal control strategies of Problem \ref{Prob_3} can be written as
\begin{align}
\label{equation:x46}
U_t ^i =  \Lambda^i K_t S_t^i 
\end{align}
where $\Lambda^i$ is given by \eqref{Lamba_eq} and $S_t ^i$ satisfies the following update equations:
\vspace{-2mm}
\begin{align}
\label{equation:x41}
       S_{1} ^i &= L_1^i  Y_{1}^i        \nonumber \\
       S_{t+1}^i &= (I- L_{t+1}C)(A S^i_t+ B^iU^i_t)  + L_{t+1}^i Y_{t+1}^i.
\end{align}
Moreover, the optimal strategies in Problem \ref{Prob_3} achieve the same cost as the optimal strategies in Problem \ref{Prob_4}.
\end{theo}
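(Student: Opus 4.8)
The plan is to follow the same philosophy as the proof of Theorem~\ref{theorem_1}: by part~1 of Lemma~\ref{lem_5}, the optimal cost of the centralized Problem~\ref{Prob_4} is a lower bound on the optimal cost of the decentralized Problem~\ref{Prob_3}, so it suffices to produce a strategy for Problem~\ref{Prob_3} that (i) respects the decentralized information structure $I_t^i=\{Y_{1:t}^i,U_{1:t-1}^i\}$ and (ii) attains exactly the centralized optimal cost. The candidate is $U_t^i=\Lambda^i K_t S_t^i$ with $S_t^i$ generated by \eqref{equation:x41}. Admissibility is checked first and is immediate by induction on $t$: equation \eqref{equation:x41} expresses $S_{t+1}^i$ in terms of $S_t^i$, controller $i$'s own action $U_t^i$, and its own fresh measurement $Y_{t+1}^i$, so $S_t^i$ --- and hence $U_t^i=\Lambda^i K_t S_t^i$ --- is a deterministic function of $I_t^i$.

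The algebraic engine is the substitutability assumption. By Assumption~\ref{asm:2} and Lemma~\ref{lem_3}, for \emph{every} vector $u$ one has $Bu=B^i\Lambda^i u$ and $Nu=N^i\Lambda^i u$; since this holds for all $u$ it yields the matrix identities $B^i\Lambda^i=B$ and $N^i\Lambda^i=N$ for each $i$. Hence, under the candidate strategy, $B^iU_t^i=BK_tS_t^i$ and $N^iU_t^i=NK_tS_t^i$, so that $BU_t=\sum_{i=1}^n B^iU_t^i=BK_t\sum_{i=1}^n S_t^i$ and $NU_t=NK_t\sum_{i=1}^n S_t^i$. Everything therefore hinges on the aggregate statistic $\sum_{i=1}^n S_t^i$.

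The core of the proof is a pathwise claim, in the spirit of Claim~\ref{claim_u}: for every realization of the primitive variables $X_1,\{W_t\},\{V_t\}$, the state trajectory produced in Problem~\ref{Prob_3} under the candidate strategy equals the trajectory produced in Problem~\ref{Prob_4} under $U_t=K_tZ_t$, and moreover $\sum_{i=1}^n S_t^i=Z_t$ for all $t$. I would prove the two assertions by a single joint induction on $t$. The base case is $\sum_i S_1^i=\sum_i L_1^i Y_1^i=L_1Y_1=Z_1$ together with $X_1$ being exogenous. For the inductive step, the hypothesis $\sum_i S_t^i=Z_t$ and the identity $B^i\Lambda^i=B$ give $BU_t=BK_t\sum_i S_t^i=BK_tZ_t$, i.e. the aggregate input injected into \eqref{dynamics_eq} coincides with the centralized one; hence $X_{t+1}$ is the same random vector in both problems, therefore so is every $Y_{t+1}^i$, and summing \eqref{equation:x41} over $i$ while using $\sum_i B^iU_t^i=BU_t=BK_tZ_t$ and $\sum_i L_{t+1}^iY_{t+1}^i=L_{t+1}Y_{t+1}$ reproduces verbatim the recursion \eqref{equation:x37} for $Z_{t+1}$, so $\sum_i S_{t+1}^i=Z_{t+1}$.

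Given this claim, the cost identity is immediate: $NU_t=NK_tZ_t$ (using $N^i\Lambda^i=N$ and $\sum_i S_t^i=Z_t$) and $X_t$ both coincide with their centralized counterparts realization by realization, so $MX_t+NU_t$ is identical across the two problems for every realization and every $t$; therefore the expected costs in \eqref{equation:xx4} are equal, and by Lemma~\ref{lem_5} the candidate strategy is optimal for Problem~\ref{Prob_3}. The step I expect to require the most care is the joint induction: the two assertions cannot be separated, because the update of $S_{t+1}^i$ needs the correct $Y_{t+1}^i$, which needs the correct $X_{t+1}$, which needs $BU_t=BK_tZ_t$, which is only available once $\sum_i S_t^i=Z_t$ has been established at time $t$; so trajectory equivalence and the statistic identity must be propagated together. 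The remaining ingredients --- admissibility, the matrix identities $B^i\Lambda^i=B$ and $N^i\Lambda^i=N$, and the final cost equality --- are routine once the invariant is in place.
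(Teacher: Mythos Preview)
Your proposal is correct and follows the same approach as the paper: both use the substitutability identities $B^i\Lambda^i=B$ and $N^i\Lambda^i=N$ together with the aggregate identity $\sum_i S_t^i=Z_t$ to conclude that $BU_t$ and $NU_t$ coincide pathwise with their centralized counterparts, and then invoke the lower bound of Lemma~\ref{lem_5}. The paper factors out the identity $\sum_i S_t^i=Z_t$ as a separate Lemma~\ref{lem_6} and then runs a time-by-time hybrid argument, whereas you merge both into a single joint induction over~$t$; your observation that the two pieces are intertwined (since $Y_{t+1}^i$ depends on $X_{t+1}$, which depends on $BU_t$) is well taken and makes the logical dependencies more explicit, but the underlying argument is the same.
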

~\\

\vspace{-8mm}
Observe that the strategies given by (\ref{equation:x46}) and \eqref{equation:x41} are valid control strategies under the information structure of Problem \ref{Prob_3} because they  depend only on $Y_{1:t}^i, U^i_{1:t-1}$ which are included in $I_t^i$. The states $S^i_t$ defined in  \eqref{equation:x41} are related to the centralized estimate $Z_t$ by the following result.
\begin{lem} \label{lem_6}
The centralized state estimate $Z_t$    and the states $S^i_t$ defined in \eqref{equation:x41} satisfy the following equation:
\begin{align}
\label{equation:x47}
Z_t =  \sum_{i=1}^{n}S_t ^i.
\end{align}
\end{lem}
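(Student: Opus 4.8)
The plan is to prove the identity $Z_t = \sum_{i=1}^n S_t^i$ by induction on $t$, exploiting the fact that both $Z_t$ and each $S_t^i$ obey affine recursions of essentially the same shape, and that the control term in the $Z_t$ recursion decomposes as $BU_t = \sum_i B^i U_t^i$. First I would treat the base case $t=1$: by Lemma \ref{lem_5} we have $Z_1 = L_1 Y_1$, and writing $L_1 = \begin{bmatrix} L_1^1 & \cdots & L_1^n\end{bmatrix}$ together with $Y_1 = \VVEC(Y_1^1,\ldots,Y_1^n)$ gives $L_1 Y_1 = \sum_{i=1}^n L_1^i Y_1^i = \sum_{i=1}^n S_1^i$, using the definition of $S_1^i$ in \eqref{equation:x41}.

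For the inductive step, assume $Z_t = \sum_{i=1}^n S_t^i$. Summing the update equation for $S_{t+1}^i$ in \eqref{equation:x41} over $i$, I would collect the three groups of terms. The estimator-correction terms sum to $(I-L_{t+1}C)A\sum_i S_t^i = (I-L_{t+1}C)A Z_t$ by the induction hypothesis. The control terms sum to $(I-L_{t+1}C)\sum_i B^i U_t^i = (I-L_{t+1}C) B U_t$, since $B = \begin{bmatrix} B^1 & \cdots & B^n\end{bmatrix}$ and $U_t = \VVEC(U_t^1,\ldots,U_t^n)$. The innovation terms sum to $\sum_i L_{t+1}^i Y_{t+1}^i = L_{t+1} Y_{t+1}$, again by block-partitioning $L_{t+1}$ and $Y_{t+1}$. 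Adding these yields exactly $(I-L_{t+1}C)(AZ_t + BU_t) + L_{t+1}Y_{t+1}$, which by \eqref{equation:x37} equals $Z_{t+1}$. This completes the induction.

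One subtlety worth flagging: the recursion for $S_t^i$ as written uses $U_t^i$, the actual control action applied by controller $i$ under the strategy $U_t^i = \Lambda^i K_t S_t^i$ of Theorem \ref{theorem_2}, so to make the argument self-contained I would note that the identity $Z_t = \sum_i S_t^i$ holds \emph{along the closed-loop trajectory} generated by those strategies; the $U_t$ appearing in \eqref{equation:x37} is then automatically $\VVEC(U_t^1,\ldots,U_t^n)$ with the same $U_t^i$'s, so the two recursions are driven consistently. I do not expect any real obstacle here — the only thing to be careful about is that the block decompositions $L_{t+1} = [L_{t+1}^1\ \cdots\ L_{t+1}^n]$, $B=[B^1\ \cdots\ B^n]$, and the stacking conventions for $Y_{t+1}$ and $U_t$ are applied consistently, which is purely bookkeeping.
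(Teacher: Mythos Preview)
Your proposal is correct and follows essentially the same induction argument as the paper: verify $Z_1=\sum_i S_1^i$ from the block decomposition $L_1Y_1=\sum_i L_1^iY_1^i$, then sum the $S_{t+1}^i$ recursions and use $\sum_i B^iU_t^i=BU_t$, $\sum_i L_{t+1}^iY_{t+1}^i=L_{t+1}Y_{t+1}$, and the induction hypothesis to recover \eqref{equation:x37}. The subtlety you flag about the $U_t^i$'s is handled implicitly in the paper (the identity in fact holds for \emph{any} common sequence of inputs $U_t^i$ fed into both recursions, since the argument is purely algebraic), so your caveat is a bit more restrictive than necessary but not wrong.
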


\begin{proof}
We prove the result by induction. For $t=1$, from (\ref{equation:x37}), we have $Z_1 = L_1 Y_1$ and according to (\ref{equation:x41}), 
\begin{align}
\sum_{i=1}^{n}S_1 ^i = L_1^1  Y_{1}^1 + L_1^2  Y_{1}^2 + ... + L_1^n  Y_{1}^n = L_1 Y_1.
\end{align}
Now assume that $Z_t =  \sum_{i=1}^{n}S_t ^i $. We need to show that $Z_{t+1} =  \sum_{i=1}^{n}S_{t+1} ^i $. From (\ref{equation:x37}), it follows that 
\begin{equation}
Z_{t+1} = (I - L_{t+1}C)(AZ_t + BU_t) + L_{t+1}Y_{t+1}.
\end{equation}
 From (\ref{equation:x41}), we have
\begin{align}
\label{equation:x474}
&\sum_{i=1}^{n}S_{t+1} ^i = \sum_{i=1}^{n} [(I- L_{t+1}C)(A S^i_t+ B^i U^i_t)  + L_{t+1}^i Y_{t+1}^i]
\nonumber \\ &=
 (I- L_{t+1}C)(A  \sum_{i=1}^{n}S_t^i + \sum_{i=1}^n B^i U^i_t ) +\sum_{i=1}^{n} L_{t+1}^i Y_{t+1}^i] 
 \nonumber \\ &=
 (I - L_{t+1}C)(AZ_t  + BU_t)+ L_{t+1}Y_{t+1}.
\end{align}
Therefore, $Z_{t+1} =  \sum_{i=1}^{n}S_{t+1} ^i $.
\end{proof}

\begin{rem}
If $X_t = \VVEC(X^1_t,\ldots,X^n_t)$ and for each $i$ $Y^i_t=X^i_t$, it can be easily shown that $S^i_t = \textsl{vec}(0,\ldots,X^i_t,\ldots,0)$.
\end{rem}

The following result is an immediate consequence of Theorem \ref{theorem_2}.

\begin{cor}\label{cor:2}
For the model described in section \ref{sec:lqgmodel}, consider any information structure under which  the  information of controller $i$ at time $t$, $\hat{I}^i_t$, satisfies 
\[ \lbrace Y^i_{1:t}, U^i_{1:t-1} \rbrace \subseteq \hat{I}^i_t \subseteq \lbrace Y_{1:t}, U_{1:t-1} \rbrace,\]
for all $i=1,\ldots,n$ and $t=1,\ldots,T$.
Then, the optimal strategies in this information structure are the same as in Theorem \ref{theorem_2}.
\end{cor}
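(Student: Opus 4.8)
The plan is to prove Corollary~\ref{cor:2} by a sandwich (information-monotonicity) argument, using Theorem~\ref{theorem_2} as a black box and never re-deriving anything about the control problem itself. The hypothesis supplies, for each controller $i$ and time $t$, two inclusions: $\{Y^i_{1:t},U^i_{1:t-1}\}\subseteq \hat{I}^i_t$ on one side and $\hat{I}^i_t\subseteq\{Y_{1:t},U_{1:t-1}\}$ on the other. The first inclusion will be used to show that the strategies exhibited in Theorem~\ref{theorem_2} are admissible under the new information structure; the second will be used to show that those same strategies are in fact optimal there.

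First I would check feasibility. The strategies $U^i_t=\Lambda^i K_t S^i_t$ of \eqref{equation:x46}, together with the recursion \eqref{equation:x41} for $S^i_t$, depend only on $Y^i_{1:t}$ and $U^i_{1:t-1}$: the state $S^i_t$ is propagated recursively from these quantities, and $U^i_t$ is then a static function of $S^i_t$. Since $\hat{I}^i_t\supseteq\{Y^i_{1:t},U^i_{1:t-1}\}$, each such strategy is a legitimate strategy under the information structure of the corollary. Let $\mathcal{J}^*_{\hat{I}}$ denote the optimal value of the cost \eqref{equation:xx4} over strategies admissible in the new structure, and $\mathcal{J}^*_{4}$ the optimal value of Problem~\ref{Prob_4}. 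By the last sentence of Theorem~\ref{theorem_2}, the strategies above attain cost exactly $\mathcal{J}^*_{4}$, so $\mathcal{J}^*_{\hat{I}}\le\mathcal{J}^*_{4}$.

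Next I would establish the matching lower bound. Because $\hat{I}^i_t\subseteq\{Y_{1:t},U_{1:t-1}\}$ for all $i$ and $t$, every strategy admissible under the new structure is also admissible in the centralized Problem~\ref{Prob_4}; taking the infimum of \eqref{equation:xx4} over the larger centralized class can only lower the value, so $\mathcal{J}^*_{4}\le\mathcal{J}^*_{\hat{I}}$. (This is exactly the monotonicity already invoked in part~(1) of Lemma~\ref{lem_5}.) Combining the two inequalities gives $\mathcal{J}^*_{\hat{I}}=\mathcal{J}^*_{4}$, and hence the admissible strategies \eqref{equation:x46}--\eqref{equation:x41}, which attain this value, are optimal under the new information structure; that is, they coincide with the strategies of Theorem~\ref{theorem_2}.

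There is essentially no hard step here; the only points requiring care are getting the direction of each information-comparison inequality right — more information never hurts, so the centralized value lower-bounds $\mathcal{J}^*_{\hat{I}}$ while the explicit decentralized strategies upper-bound it — and the (immediate) observation that those explicit strategies consult only $Y^i_{1:t}$ and $U^i_{1:t-1}$, i.e.\ the lower end of the sandwich. Both are verified by inspection of \eqref{equation:x46} and \eqref{equation:x41}, so the corollary follows directly from Theorem~\ref{theorem_2} and Lemma~\ref{lem_5}.
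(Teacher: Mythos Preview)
Your proposal is correct and is precisely the sandwich argument the paper has in mind when it calls Corollary~\ref{cor:2} ``an immediate consequence of Theorem~\ref{theorem_2}'': the strategies \eqref{equation:x46}--\eqref{equation:x41} are admissible under $\hat{I}^i_t$ by the lower inclusion and achieve the centralized optimum, while the upper inclusion gives the matching lower bound via Lemma~\ref{lem_5}(1). Nothing further is needed.
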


\vspace{-4mm}
\subsection{Proof of Theorem \ref{theorem_2}}\label{sec:OF:pot}
For notational convenience, we will describe the proof for $n=2$. If $U_t = K_t Z_t$ is the optimal control strategy of Problem \ref{Prob_4}, then from Lemma \ref{lem_6}, we have:
\begin{align}
\label{equation:x475}
U_t = K_t Z_t = K_t (S_t ^1 + S_t ^2).
\end{align}
We claim that the decentralized control strategies defined in Theorem \ref{theorem_2}, that is 
\begin{equation}
U_t =  \begin{bmatrix}
           U^1_t\\[0.3em]
           U^2_t
          \end{bmatrix} =\begin{bmatrix}
      \Lambda^1 K_t S_t^1\\[0.3em]
      \Lambda^2 K_t S_t^2        \end{bmatrix},  ~~~t=1,\ldots,T,\label{equation:x475a}
\end{equation}
 yield the same expected cost as the optimal centralized control strategies $U_t = K_t Z_t, t=1,\ldots,T$.

We first consider the control system under the centralized strategies.
We proceed sequentially to establish the claim by successively changing the control strategies at each time step.
Under the control strategies $U_t = K_t Z_t, t=1,\ldots,T$, the controlled system can be viewed as a linear system with $\VVEC(X_t,Z_t)$ as the state. 
We first change the control strategy at time $t=1$ from $U_1 = K_1 Z_1$ to the one given by \eqref{equation:x475a} and show that it doesn't change the instantaneous cost or the future evolution of the linear system.

Under control action $U_1 = K_1 Z_1$, we have $NU_1 = NK_1 Z_1$.
 Under control actions $U_1^1 = \Lambda^1 K_1 S^1_1, U_1^2 = \Lambda^1 K_1 S^2_1$, we have 
\begin{align}
\label{equation:x521}
NU_1 &= \begin{bmatrix} N^1 &N^2  \end{bmatrix} U_1 = N^1 U_1^1 + N^2 U_1^2 \nonumber \\
&= N^1 \Lambda^1 K_1 S^1_1 + N^2 \Lambda^2 K_1 S^2_1. 
\end{align}
From the substitutability assumption (Assumption \ref{asm:2}) and Lemma \ref{lem_3}, for any vector $u$,  $Nu  = N^i \Lambda^i u$. Therefore, 
\begin{align}
&N^1 \Lambda^1 K_1 S^1_1 = NK_1 S^1_1, \quad
N^2 \Lambda^2 K_1 S^2_1 = NK_1 S^2_1. 
\end{align}
  (\ref{equation:x521}) can now be written as,
\begin{align}
\label{equation:x522}
\hspace{-2mm}
N^1 \Lambda^1 K_1 S^1_1 + N^2 \Lambda^2 K_1S^2_1 = N(K_1 S^1_1 + K_1 S^2_1) = NK_1 Z_1,
\end{align}
where the last equality is true because $Z_1=S^1_1 +S^2_1$. Thus, the change in strategies at time $t=1$ does not affect the cost at time $t=1$.

The change in strategies at time $t=1$ affects the next state $\VVEC(X_2,Z_2)$ only through the term $BU_1$.
From the substitutability assumption (Assumption \ref{asm:2}) and  Lemma \ref{lem_3}, for any vector $u$,  $Bu = B^i \Lambda^i u$. Therefore, 
\begin{align}
\hspace{-2mm}
B^1 \Lambda^1 K_1 S^1_1 + B^2 \Lambda^2 K_1 S^2_1 = B(K_1 S^1_1 + K_1 S^2_1) = BK_1 Z_1.
\end{align}

The future state evolution is unaffected by the change in strategies at time $t=1$. Therefore, changing strategies at time $t=1$ from the centralized strategy to the one given  by \eqref{equation:x475a} does not change the expected cost. Proceeding in the same manner for all successive time instants establishes the claim.

\vspace{-1mm}
\section{Concluding Remarks}
We considered two problems, an LQG dynamic team problem and a decentralized LQG control problem and defined a property called substitutability in these problems.
For the non-partially-nested LQG dynamic team problem, we showed that under certain conditions an optimal strategy of each team member is linear in its information. For the non-partially-nested decentralized control problem under the substitutability assumption,  we showed that linear strategies are optimal and  we provided a complete state-space characterization of optimal strategies. 
Our results suggest that  substitutability can work as a counterpart of  the information structure requirements that enable simplification of dynamic teams and decentralized control problems. 

\vspace{-3mm}
\appendices

\section{Proof of Claim 1}
\label{proof_claim}
We want to show that the term $NU$ is the same under the team  strategies $\boldsymbol\gamma_{l+1}$ and $\boldsymbol\gamma_{l}$ of Problem \ref{Prob_2}. 
Under different team strategies, the information available to team members changes. Hence, we first need to show that $\tilde{Z}^i, \forall i \in \mathcal{M},$ is the same under $\boldsymbol\gamma_{l+1}$ and $\boldsymbol\gamma_{l}$. If we denote the information available to member $i$ in Problem 2 under team strategies $\boldsymbol\gamma_{l+1}$ and $\boldsymbol\gamma_{l}$ by 
$\tilde{Z}^i \big|_{\boldsymbol\gamma_{l+1}}$ and
$\tilde{Z}^i \big|_{\boldsymbol\gamma_{l}}$ 
respectively, we want to show that,
\begin{align}
\label{claim_performance}
\tilde{Z}^i \big|_{\boldsymbol\gamma_{l+1}}
= \tilde{Z}^i \big|_{\boldsymbol\gamma_{l}}   \hspace{5mm} \forall i \in \mathcal{M}.
\end{align}
According to \eqref{pn_inf.}, $\tilde{Z}^i $ is obtained from $\{Z^r,$  $r\leq i\}$.  Therefore, to show that \eqref{claim_performance} holds, it suffices to show that
\vspace{-0mm}
 \begin{align}
\label{claim_performance_equal}
Z^r \big|_{\boldsymbol\gamma_{l+1}}
= Z^r \big|_{\boldsymbol\gamma_{l}}   \hspace{5mm} \forall   r \in \mathcal{M}.
\end{align}
%

\vspace{-1mm}
According to Procedure 1, $\gamma^j_{l+1}$ is the same as $\gamma^j_{l}$  for $j \in \mathcal{M} \setminus \{t,k\}$. 
We, therefore, categorize team members into two groups:
\begin{itemize}
\item Group 1: $ \lbrace r\in \mathcal{M} : r \leq \min\{t, k\}    \rbrace$
\item Group 2: $ \lbrace r\in \mathcal{M} : r > \min\{t, k\}  \rbrace$
\end{itemize}

For $r$ in Group 1, $Z^r$ does not depend on the strategies of members $t$ and $k$. 
Therefore, for $r$ in Group 1, \eqref{claim_performance_equal}  holds.



We will show inductively  that for all $r \leq h$, $Z^r$ is the same under $\boldsymbol\gamma_{l}$ and $\boldsymbol\gamma_{l+1}$. For $h = \min \{t,k\}$, the statement holds  because we have shown that it holds for $r$ in Group 1. 


Now, assume that for some $\alpha \geq \min\{t,k\}$,    \eqref{claim_performance_equal} holds  for all $r \leq \alpha$ (induction hypothesis). This implies that \eqref{claim_performance} also holds for $r \leq \alpha$.  We now need to show that \eqref{claim_performance_equal} holds for $r=\alpha +1$. 

Suppose $\alpha+1 > t, \alpha+1 >k$. Under team strategy $\boldsymbol\gamma_{l+1}$, $Z^{\alpha +1}$ can be written as
\begin{align}
\label{inf_proof_1}
&Z^{\alpha +1} \big|_{\boldsymbol\gamma_{l+1}} = H^{\alpha +1} \Xi + \sum_{j <\alpha +1} D^{\alpha +1,j} U^j \big|_{\gamma^{j}_{l+1}}  \nonumber \\
& = H^{\alpha +1} \Xi + \sum_{j <\alpha +1 , j\neq t, j \neq k} D^{\alpha +1,j} \gamma^{j}_{l+1} (\tilde{Z}^j\big|_{\boldsymbol\gamma_{l+1}}) + \nonumber\\
&~~~ D^{\alpha +1,t} \gamma^{t}_{l+1} (\tilde{Z}^t\big|_{\boldsymbol\gamma_{l+1}}) + D^{\alpha +1,k} \gamma^{k}_{l+1} (\tilde{Z}^k\big|_{\boldsymbol\gamma_{l+1}}).
\end{align}

\vspace{-2mm}
 $\tilde{Z}^j, \tilde{Z}^t$ and $\tilde{Z}^k$  present in the right hand side of  \eqref{inf_proof_1} are the same under control strategies $\boldsymbol\gamma_{l}$ and $\boldsymbol\gamma_{l+1}$ by the induction hypothesis. 
Further, for $j \neq t, k,$ $\gamma^j_{l+1} = \gamma^j_l$.
Using these observations and \eqref{l_0_2} and \eqref{l_0_2_a}, \eqref{inf_proof_1} can be written as,
\begin{align}
\label{inf_proof_2}
&Z^{\alpha+1} \big|_{\boldsymbol\gamma_{l+1}} = H^{\alpha+1} \Xi  + \sum_{j <\alpha+1 , j\neq t, j \neq k} D^{\alpha+1,j} \gamma^{j}_{l} (\tilde{Z}^j\big|_{\boldsymbol\gamma_{l}}) + \nonumber\\
&
+ 
 D^{\alpha+1,t} \big(\gamma^t_{l}(\tilde{Z}^t\big|_{\boldsymbol\gamma_{l}}) - K^{ts}_l Z^s\big|_{\boldsymbol\gamma_{l}} \big)  \nonumber\\
&+ D^{\alpha+1,k} \big(\gamma^k_{l}(\tilde{Z}^k\big|_{\boldsymbol\gamma_{l}}) +\Lambda^{kst}
K^{ts}_l Z^s\big|_{\boldsymbol\gamma_{l}} \big).
\end{align}
According to Lemma \ref{lem_team} and the substitutability assumption, 
\begin{align}
\label{inf_proof_3}
&D^{\alpha+1,k} \Lambda^{kst}
K^{ts}_l Z^s\big|_{\boldsymbol\gamma_{l}} = D^{\alpha+1,t} K^{ts}_l Z^s\big|_{\boldsymbol\gamma_{l}}.
\end{align}
Using \eqref{inf_proof_3}, \eqref{inf_proof_2} can be simplified as,
\begin{align}
\label{inf_proof_4}
&Z^{\alpha+1} \big|_{\boldsymbol\gamma_{l+1}} = H^{\alpha+1} \Xi  + \sum_{j <\alpha+1 , j\neq t, j \neq k} D^{\alpha+1,j} \gamma^{j}_{l} (\tilde{Z}^j\big|_{\boldsymbol\gamma_{l}}) 
\nonumber \\
&+ 
 D^{\alpha+1,t} \gamma^{t}_{l} (\tilde{Z}^t\big|_{\boldsymbol\gamma_{l}}) + D^{\alpha+1,k} \gamma^{k}_{l} (\tilde{Z}^k\big|_{\boldsymbol\gamma_{l}})  
 = Z^{\alpha+1} \big|_{\boldsymbol\gamma_{l}}.
\end{align}

Thus, $Z^{\alpha+1} \big|_{\boldsymbol\gamma_{l+1}} = Z^{\alpha+1} \big|_{\boldsymbol\gamma_{l}}$ if $\alpha+1 >t, \alpha+1 >k$.
 If $t < \alpha+1 \leq k$ (alternatively $k < \alpha+1 \leq t$), we can employ arguments similar to above along with the fact that $D^{\alpha +1,k} =0$ (alternatively $D^{\alpha +1,t} =0$) to show \eqref{claim_performance_equal}  for $r = \alpha +1$.

Hence,  by induction, \eqref{claim_performance_equal} holds for all $r$ from $1$ to $n$.  Therefore, $Z^r$ and consequently $\tilde{Z}^r$ for $r \in \mathcal{M}$ are the same under team strategies $\boldsymbol\gamma_{l}$ and $\boldsymbol\gamma_{l+1}$. 

Now, we show that $NU$ is the same under team strategies $\boldsymbol\gamma_{l}$ and $\boldsymbol\gamma_{l+1}$. Under  $\boldsymbol\gamma_{l+1}$, $NU$ can be written as follows,
\begin{align}
\label{inf_proof_5}
&N U\big|_{\boldsymbol\gamma_{l+1}} = \sum_{j=1}^n N^j \gamma^{j}_{l+1} (\tilde{Z}^j)
\nonumber \\
&
=
\sum_{j \in \mathcal{M} \setminus \{t,k\}} N^j \gamma^{j}_{l+1} (\tilde{Z}^j)+ N^t \gamma^{t}_{l+1}(\tilde{Z}^t) + N^k \gamma^{k}_{l+1} (\tilde{Z}^k) \nonumber 
\\
&= \sum_{j \in \mathcal{M} \setminus \{t,k\}} N^j \gamma^{j}_{l}(\tilde{Z}^j) + N^t \big(\gamma^t_{l}(\tilde{Z}^t) - K^{ts}_l Z^s \big)  \nonumber \\
& 
+ N^k \big(\gamma^k_{l}(\tilde{Z}^k) + 
\Lambda^{kst}
K^{ts}_l Z^s
 \big)
\nonumber \\
&
=
\sum_{j \in \mathcal{M} \setminus \{t,k\}} N^j \gamma^{j}_{l} (\tilde{Z}^j)+ N^t \gamma^{t}_{l}(\tilde{Z}^t) + N^k \gamma^{k}_{l} (\tilde{Z}^k) 
\nonumber \\
&= N U\big|_{\boldsymbol\gamma_{l}}
\end{align}
where the penultimate equality is true because  Lemma \ref{lem_team} and the substitutability assumption provide that
\begin{align}
\label{inf_proof_8}
&N^{k} \Lambda^{kst}
K^{ts}_l Z^s = N^{t} K^{ts}_l Z^s.
\end{align}

\bibliographystyle{IEEEtran}
\bibliography{cdc15,collection}

\end{document}